\documentclass[12pt, reqno, article]{amsart}
\usepackage{setspace}
\usepackage[foot]{amsaddr}
\usepackage{bm}
\usepackage{tikz-qtree}
\usepackage{booktabs}
\usepackage{multirow}
\usepackage{bbm}
\usepackage{diagbox}
\usepackage{algorithm, algpseudocode}
\usetikzlibrary{positioning}
\usepackage[per=slash]{siunitx}

   \topmargin=0in
   \oddsidemargin=0in
   \evensidemargin=0in
   \textwidth=6.5in
   \textheight=8.5in

\newtheorem{proposition}{Proposition}

\newtheorem{lemma}{Lemma}

\theoremstyle{definition}
\newtheorem{condition}{Condition}

\theoremstyle{remark}

\newtheorem{remark}{Remark}
\newcommand{\var}{\text{var}}
\newcommand{\se}{\text{se}}
\newcommand{\sd}{\text{sd}}
\newcommand{\tr}{\text{tr}}
\newcommand{\limn}{\underset{n \to \infty}{\lim}}

\newcommand{\pr}{\text{Pr}}
\newcommand{\E}{{E}}
\renewcommand{\P}{{\text{pr}}}

\newcommand{\bH}{{\mathbf{H}}}

\newcommand{\bx}{{\mathbf{x}}}
\newcommand{\bZ}{{\mathbf{Z}}}
\newcommand{\bQ}{{\mathbf{Q}}}
\newcommand{\bB}{{\mathbf{B}}}
\newcommand{\bHQ}{{\mathbf{H_Q}}}
\newcommand{\bI}{{\mathbf{I}}}
\newcommand{\bz}{{\mathbf{z}}}

\newcommand{\bY}{{\mathbf{Y}}}
\newcommand{\bX}{{\mathbf{X}}}

\newcommand{\1}{\mathbbm{1}}

\newcommand{\cF}{{\mathcal{F}}}

\newcommand{\cD}{{\mathcal{D}}}
\newcommand{\cC}{{\mathcal{C}}}

\newcommand{\cZ}{{\mathcal{Z}}}


\usepackage{natbib}
\bibpunct{(}{)}{;}{a}{}{,}

\allowdisplaybreaks

\title[]{Biased Encouragements and Heterogeneous Effects in an Instrumental Variable Study of Emergency General Surgical Outcomes}\thanks{The dataset used for this study was purchased with a grant from the Society of American Gastrointestinal and Endoscopic Surgeons. Although the AMA Physician Masterfile data is the source of the raw physician data, the tables and tabulations were prepared by the authors and do not reflect the work of the AMA. The Pennsylvania Health Cost Containment Council (PHC4) is an independent state agency responsible for addressing the problems of escalating health costs, ensuring the quality of health care, and increasing access to health care for all citizens. While PHC4 has provided data for this study, PHC4 specifically disclaims responsibility for any analyses, interpretations or conclusions. Some of the data used to produce this publication was purchased from or provided by the New York State Department of Health (NYSDOH) Statewide Planning and Research Cooperative System (SPARCS). However, the conclusions derived, and views expressed herein are those of the author(s) and do not reflect the conclusions or views of NYSDOH. NYSDOH, its employees, officers, and agents make no representation, warranty or guarantee as to the accuracy, completeness, currency, or suitability of the information provided here. The authors declare no conflicts.}
\author{Colin B. Fogarty$^{1}$} \address{$^{1}$ Operations Research and Statistics Group, MIT Sloan School of Management, Massachusetts Institute of Technology, Cambridge, MA } 
\author{ Kwonsang Lee$^{2}$}\address{ $^{2}$ Department of Statistics, Sungkyunkwan University, Seoul, Republic of Korea}
 \author{Rachel R. Kelz$^{3}$}\address{$^{3}$Center for Surgery and Health Economics, Department of Surgery, Hospital of the University of Pennsylvania, Philadelphia, PA} 
 \author{Luke J. Keele$^{4}$}\address{$^4$ University of Pennsylvania, 3400 Spruce St, Philadelphia, PA}

\begin{document}

\begin{abstract}
We investigate the efficacy of surgical versus non-surgical management for two gastrointestinal conditions, colitis and diverticulitis, using observational data. We deploy an instrumental variable design with surgeons' tendencies to operate as an instrument. Assuming instrument validity, we find that non-surgical alternatives can reduce both hospital length of stay and the risk of complications, with estimated effects larger for septic patients than for non-septic patients. The validity of our instrument is plausible but not ironclad, necessitating a sensitivity analysis. Existing sensitivity analyses for IV designs assume effect homogeneity, unlikely to hold here because of patient-specific physiology. We develop a new sensitivity analysis that accommodates arbitrary effect heterogeneity and exploits components explainable by observed features. We find that the results for non-septic patients prove more robust to hidden bias despite having smaller estimated effects. For non-septic patients, two individuals with identical observed characteristics would have to differ in their odds of assignment to a high tendency to operate surgeon by a factor of 2.34 to overturn our finding of a benefit for non-surgical management in reducing length of stay. For septic patients, this value is only 1.64. Simulations illustrate that this phenomenon may be explained by differences in within-group heterogeneity.
 
\end{abstract}

\keywords{Instrumental variables, Sensitivity analysis, Matching, Local average treatment effect, Effect modification}



\maketitle
\thispagestyle{empty}

\doublespace
\newpage
\setcounter{page}{1}
\section{Introduction}

\subsection{Emergency general surgery conditions and operative care}
\label{sec:motiv_example}

Emergency general surgery (EGS) refers to medical emergencies where the injury is most often endogenous (a burst appendix) while trauma care refers to injuries that are exogenous (a gunshot wound). There are a set of medical conditions known as EGS conditions that are thus distinct from trauma injuries \citep{gale2014public,shafi2013emergency}. While operative management is often the primary course of treatment for EGS conditions, it carries additional risk of complications and adverse reactions to anesthesia. For many EGS conditions, non-operative alternatives including observation, minimally invasive procedures and supportive care exist. Therefore, a critical clinical question faced by the acute care community is determining the overall efficacy of surgical interventions for EGS conditions relative to alternative courses of treatment. 

We focus on the effectiveness of operative care for two common EGS conditions: ulcerative colitis and diverticulitis. Diverticula are marble-sized pouches that can form in the lining of the lower part of the large intestine. Diverticulitis occurs when one or more diverticula tear and become inflamed or infected, causing severe abdominal pain, fever, and nausea. Ulcerative colitis is a form of inflammatory bowel disease that causes long-lasting inflammation and ulcers in the lining of the large intestine. The condition most often begins gradually but can worsen, leading to life-threatening complications. Both conditions may be managed through diet and medication, but may require surgery to remove sections of the intestine.

One challenge in studying the effectiveness of surgery for EGS conditions is that randomized comparisons between surgical and non-surgical care are generally judged to violate the principle of clinical equipoise.  Evidence must instead be collected from observational studies. Unsurprisingly, patients who undergo emergency operative treatment are often the sickest patients \citep{shafi2013emergency}. While statistical adjustment for measures of patient frailty may render treated and control groups comparable on observables, such adjustments are often insufficient due to the potential presence of hidden biases \citep{keele2018icubeds}.

\subsection{Instrumental variables and physician preference as an instrument}
\label{sec:ivintro}

Instrumental variable (IV) methods are a set of statistical techniques that can facilitate the identification of causal effects in the presence of hidden bias. An IV is a variable that is associated with the treatment of interest but affects outcomes only indirectly through its impact on treatment assignment, providing a haphazard nudge towards taking a treatment \citep{Angrist:1996}. For a variable to be an instrument, (1) the IV must be associated with the exposure; (2) the IV must be randomly or as-if randomly assigned; and (3) the IV cannot have a direct effect on the outcome \citep{Angrist:1996}. Under (1)-(3), an IV provides a consistent estimate of a causal effect even in the presence of unobserved confounding between the exposure and the outcome. See \citet{Baiocchi:2014} and \citet{imbens2014instrumental} for reviews.

One frequently proposed instrument in comparative effectiveness research is a physician's preference for a specific course of treatment. Here, it is assumed that for a patient of a given level of disease severity, physicians may differ in their preferred course of treatment for primarily idiosyncratic reasons.  When a patient receives care from a physician, it is then assumed they are being exposed to an as-if randomized encouragement towards that physician's preferred treatment. \citet{brookhart2006evaluating} first used this type of IV in the context of drug prescriptions. See \citet{brookhart2007preference} for an overview. 

We use a surgeon's preference for operative management as an instrument for whether a patient receives surgery after admission to the emergency department. Following \citet{keeleegsiv2018}, we construct a measure of a surgeon's tendency to operate (TTO) by calculating the percentage of times a surgeon operates when presented with an EGS condition. For the surgeons in the data, TTO varied from less than 5\% to 95\%. Consider two patients in the emergency department with similar baseline characteristics. Being assigned to a patient with a higher or lower TTO may then be viewed as a nudge towards or away from surgery \citep{keeleegsiv2018}. Assignment of surgeons with varying TTO to patients is a plausibly haphazard process, with clear impact on whether or not a patient receives operative care. 

Our data set merges the American Medical Association (AMA) Physician Masterfile with all-payer hospital discharge claims from New York, Florida and Pennsylvania in 2012-2013. The study population consists of all patients admitted for emergency or urgent inpatient care. The data include an identifier for each patient's surgeon, patient sociodemographic and clinical characteristics including indicators for frailty, severe sepsis or septic shock, and 31 comorbidities based on Elixhauser indices \citep{elixhauser1998comorbidity}. The data also include measures of the surgeon's age and years of experience. Surgeons were excluded if they did not perform at least 5 operations for one of the 51 specific EGS conditions per year within the two-year study time-frame. This resulted in $N=44,082$ patients presenting with either colitis or diverticulitis. Our primary outcome was the presence of any in-hospital complication (infection, bleeding) as measured by a binary variable. Secondary outcomes are hospital length-of-stay; and in-hospital mortality within 30 days after admission. We also investigated whether the presence of sepsis (an inflammatory response to infection that can result in organ malfunction) acts as a modifier for the effect of surgery. 

\subsection{Probing the robustness of an instrumental variable analysis and the roles of effect heterogeneity}

For surgeons' tendency to operate to be a valid instrument one must assume that there are no unobserved variables which simultaneously influence the TTO for the surgeon assigned to a patient and the patient's health outcomes. Imagine that surgeons with a higher TTO are more skilled surgeons. If the sickest patients are treated by the most skilled surgeons, then the resulting estimate of surgery's efficacy may be biased. While the data include measures of surgeon age and experience, these are imperfect measures of surgical skill, such that a critic of the study may be unconvinced that bias has been removed. 

Suspicions of residual bias motivate sensitivity analyses for IV studies. A sensitivity analysis asks how strong the impact of a hidden bias would have to be on the instrumental variable to alter the substantive conclusions of a study. Existing methods for sensitivity analysis share a common limitation: they are derived under a pre-specified model of effects for each individual. The most prevalent model for effects is that of proportional doses - that the difference in potential outcomes under encouragement to treatment versus encouragement to control is proportional to the difference in treatment actually received under encouragement to treatment versus encouragement to control \citep{Imbens:2005b}. Under this model, all patients who would undergo surgery if and only if assigned to a surgeon with a high TTO would have the same treatment effect. 

Treatment effect heterogeneity refers to variation in the effects of an intervention across individuals. While some of this variation may be ascribed to intrinsically random variation, a portion of it may be predictable on the basis of observed covariates known as \textit{effect modifiers}. Our work synthesizes and builds upon a growing recent literature on effect heterogeneity in matched designs. \citet{Baiocchi:2010} presents a mode of inference for matched IV studies with heterogeneous effects, but only does so under an assumption of no unmeasured confounding. \citet{Fogarty:2018} presents a general recipe for variance estimation for finely stratified designs but does not address variance estimation within a sensitivity analysis. \citet{Fogarty:2019} develops a sensitivity analysis valid under effect heterogeneity, but employs a standard error estimator that can be unduly conservative and does not address the validity of sensitivity analyses using McNemar's test with binary outcomes when effects vary.

We first present a sensitivity analysis for matched IV studies that is asymptotically correct for a weaker null hypothesis on a parameter known as the effect ratio \citep{Baiocchi:2010, kang2016full}. This method obviates concerns that the nonidentified components of effect heterogeneity may conspire with hidden bias to render attempts at sensitivity analysis insufficient. In the particular case of binary outcomes, we prove that sensitivity analysis in \citet{Baiocchi:2010}, justified therein only under the sharp null, is actually a special case of our method and hence also provides an asymptotically correct sensitivity analysis even with heterogeneous effects. We next highlight how the identified aspects of effect heterogeneity can improve the performance of a sensitivity analysis. We develop improved standard error estimates for use within sensitivity analyses with heterogeneous effects. These standard errors are valid regardless of the truth of an underlying model for effect modification as a function of observed covariates, but provide more substantial gains should the posited model of effects be correct. By further comparing results between the septic and non-septic subgroups, we provide an illustration of the role that latent heterogeneity can play in the resulting power of a sensitivity analysis as originally discussed in \citet{Rosenbaum:2005b}. 

\section{A matched instrumental variable study and its experimental ideal}
\subsection{A near-far match}

Stronger instruments provide both more informative effect estimates and estimates that are more resistant to hidden bias, providing motivation to actively strengthen an instrument as part of the analysis \citep{Small:2008}. One approach for strengthening an instrument is a near-far matched design  \citep{Baiocchi:2010}. Near-far matching creates matched pairs that are similar in terms of observed covariates but highly dissimilar on the values of the instrument. We first categorize surgeons as high versus low tendancy to operate depending on whether or not their TTO value is above or below the median TTO (0.604). We then use a form of near-far matching that combines refined covariate balance with near-far matching \citep{Pimentel:2015a,keeleicumatch2019}. This method allows us to balance a large number of nominal covariates while maintaining a large distance within pairs on the numerical value for TTO.

Each matched pair is exactly matched on both hospital and an indicator of sepsis, such that across-hospital differences in quality of care and fundamental differences in physiology between septic and non-septic patient cannot bias the analysis. For the remaining covariates, we minimized the total of the within-pair distances on covariates as measured by the Mahalanobis distance and applied a caliper to the propensity score. We further applied near-fine balance constraints for three indicators for surgical volume at the hospital at the time of admission and a dummy variable for whether a patient had a specific disability. Fine balance constrains two groups to be balanced on a particular variable without restricting matching on the variable within individual pairs \citep{Rosenbaum:2007b}. A near-fine balance constraint returns a finely balanced match when one is feasible, and otherwise minimizes the deviation from fine balance \citep{Yang:2012}. As a result, the marginal distribution for these three indicators will be exactly or nearly exactly the same across levels of the IV. Finally, we applied optimal subsetting to discard matched pairs with high levels of imbalance on covariates \citep{Rosenbaum:2011}. After matching, we calculated the standardized difference for each covariate, which is the mean difference between matched patients divided by the pooled standard deviation before matching. We attempted to produce absolute standardized differences of less than 0.10, a commonly recommended threshold \citep{Rosenbaum:2010}.

Table~\ref{tab.bal} contains balance statistics before and after matching. Matching moved most absolute standardized differences below 0.1, suggesting successful attenuation of potential biases due to discrepancies in observed surgeon characteristics and available patient physiology metrics. Based upon observed covariates, Table \ref{tab.bal} suggests that surgeons with high versus low TTO appear to have been assigned to patients in an asystematic manner.

\begin{table}[ht]
\scriptsize{
\centering
\caption{Balance before and after matching. Instrument split into high vs. low categories at the median value for TTO (0.604). Before and after matching, average values for each variable are given for the high and low groups along with the standardized difference.}
\label{tab.bal}
\begin{tabular}{lcccccc}
  \toprule
 & \multicolumn{3}{c}{Unmatched Data} & \multicolumn{3}{c}{Matched Data} \\
 & \multicolumn{3}{c}{N=44,082 (11,344 Septic)} & \multicolumn{3}{c}{N=6,068 (2,336 Septic)}\\
 & High TTO & Low TTO & St. Dif. & High TTO & Low TTO & St. Dif.\\
  \midrule
  TTO & 0.83 & 0.33 & 2.82 & 0.84 & 0.27 & 4.18 \\  
  Age & 63.48 & 65.24 & -0.11 & 68.21 & 68.13 & 0.01 \\ 
  No. Comorbidities & 3.09 & 3.11 & -0.01 & 3.99 & 3.96 & 0.01 \\ 
  Surgeon Age & 54.16 & 52.95 & 0.12 & 53.02 & 51.63 & 0.15 \\ 
  Surgeon Experience (Years) & 17.07 & 15.45 & 0.16 & 15.59 & 14.70 & 0.09 \\ 
  Female & 0.52 & 0.55 & -0.07 & 0.56 & 0.56 & 0.00 \\ 
  Hispanic & 0.11 & 0.12 & -0.04 & 0.08 & 0.09 & -0.03 \\ 
  White & 0.80 & 0.73 & 0.17 & 0.76 & 0.76 & 0.00 \\ 
  African-American & 0.10 & 0.15 & -0.13 & 0.13 & 0.13 & -0.00 \\ 
  Other Racial Cat. & 0.09 & 0.12 & -0.09 & 0.11 & 0.11 & 0.00 \\ 
  Sepsis & 0.29 & 0.24 & 0.11 & 0.38 & 0.38 & 0.00 \\ 
  Disability & 0.07 & 0.08 & -0.06 & 0.10 & 0.10 & 0.00 \\
    & \multicolumn{6}{c}{Select Comorbidities} \\
  Congestive Heart Failure & 0.15 & 0.15 & -0.02 & 0.25 & 0.23 & 0.05 \\ 
  Cardiac Arrhythmias & 0.26 & 0.23 & 0.06 & 0.37 & 0.35 & 0.04 \\ 
  Valvular Disease & 0.07 & 0.07 & 0.01 & 0.12 & 0.10 & 0.07 \\ 
  Pulm. Circulation Disorders & 0.05 & 0.04 & 0.01 & 0.08 & 0.07 & 0.03 \\ 
  Peripheral Vascular Disorders & 0.08 & 0.07 & 0.06 & 0.10 & 0.11 & -0.02 \\ 
  Hypertension, Uncomplicated & 0.43 & 0.45 & -0.02 & 0.44 & 0.45 & -0.02 \\ 
  Paralysis & 0.02 & 0.01 & 0.03 & 0.03 & 0.02 & 0.01 \\ 
  Other Neurological Disorders & 0.08 & 0.08 & 0.00 & 0.13 & 0.11 & 0.06 \\ 
  Chronic Pulmonary Disease & 0.22 & 0.22 & -0.00 & 0.27 & 0.26 & 0.04 \\ 
  Diabetes, Uncomplicated & 0.17 & 0.21 & -0.09 & 0.22 & 0.23 & -0.01 \\ 
  Diabetes, Complicated & 0.05 & 0.04 & 0.02 & 0.06 & 0.07 & -0.02 \\ 
  Hypothyroidism & 0.13 & 0.13 & -0.01 & 0.15 & 0.15 & -0.00 \\ 
  Renal Failure & 0.15 & 0.17 & -0.05 & 0.24 & 0.26 & -0.04 \\ 
  Liver Disease & 0.04 & 0.06 & -0.07 & 0.05 & 0.06 & -0.06 \\ 
  Peptic Ulcer, Excl. Bleeding & 0.01 & 0.02 & -0.06 & 0.01 & 0.01 & 0.03 \\ 
  Hypertension, Complicated & 0.14 & 0.16 & -0.03 & 0.22 & 0.24 & -0.03 \\ 
   \bottomrule
\end{tabular}
}
\end{table}

\subsection{Notation for randomized encouragement designs and instrumental variable studies}

We now introduce notation for the experimental design that near-far matching emulates. To assist the reader, essential notation and definitions introduced here and elsewhere are summarized in \S A of the web-based supplement. There are $2n$ individuals partitioned into $n$ matched pairs. Each pair contains one individual encouraged to take the treatment, denoted $Z_{ij}=1$, and one individual not encouraged to take the treatment ($Z_{ij}=0$). In our study, the patient assigned to the surgeon with the higher value of the instrumental variable $TTO$ in each pair is the patient for whom $Z_{ij}=1$.  Individual $ij$ has two potential responses: one under encouragement to treatment, $y_{ij}(1)$,  and the other under no encouragement, $y_{ij}(0)$. Further, individual $ij$ has two values for the level of treatment actually received: that under encouragement, $d_{ij}(1)$, and under no encouragement, $d_{ij}(0)$. In this work we will consider binary values for $d_{ij}(z)$ for $z=0,1$, such that $d_{ij}(z)$ represents whether or not an individual would actually take the treatment when assigned encouragement level $z$. That said, the methods we develop readily extend to the case of continuous $d_{ij}(z)$. The observed outcome and exposure are $Y_{ij} = y_{ij}(Z_{ij})$ and $D_{ij} = d_{ij}(Z_{ij})$ respectively. Let $\mathcal{F} = \{y_{ij}(1), y_{ij}(0), d_{ij}(1), d_{ij}(0), \mathbf{x}_{ij}, u_{ij}: i=1, \ldots ,n; j=1, 2\}$, where $\mathbf{x}_{ij}$ represent the observed covariates for each individual and $u_{ij}$ is an unobserved covariate with domain $0 \leq u_{ij} \leq 1$. While a univariate hidden variable constrained to the unit interval may appear restrictive, in fact there always exists a hidden variable between 0 and 1 such that if the practitioner had access to it, adjustment for it would be sufficient for identifying a causal effect \citep[\S9, footnote 15]{Rosenbaum:2017}. We consider finite-population inference, which will be reflected by conditioning upon $\mathcal{F}$ in the probabilistic statements that follow. Inferential statements will pertain to parameters defined in the observed study population, without reliance upon the existence of a hypothetical superpopulation. Let $\mathbf{Z} = (Z_{i1},Z_{i2},...,Z_{n2})^T$, and let boldface similarly denote other vector quantities such as $\mathbf{u} = (u_{i1},u_{i2},...,u_{n2})^T$. 

Patients are paired using observed covariates $\mathbf{x}_{ij}$ such that $\mathbf{x}_{i1}\approx \mathbf{x}_{i2}$. Despite this, paired subjects may not be equally likely to be exposed to high versus low values of the IV due to discrepancies on an unobserved covariate, i.e. $u_{i1} \neq u_{i2}$. Let $\Omega = \{\mathbf{z}: z_{i1}+z_{i2}=1\}$ be the set of the $2^n$ possible values $\mathbf{z}$ of $\mathbf{Z}$, and let $\mathcal{Z}$ denote the event $\{\mathbf{Z}\in \Omega\}$. In finite-population causal inference the only source of randomness is the assignment of the encouragement $Z_{ij}$. Were $\mathbf{Z}$ chosen uniformly as in a randomized experiment, we would have $\text{Pr}(\mathbf{Z}=\mathbf{z}\mid \mathcal{F}, \mathcal{Z}) = 1/|\Omega|$ for each $\mathbf{z}\in \Omega$, where $|A|$ is the number of elements in $A$.

\subsection{Conventional IV Assumptions}\label{sec:IVassume}

Here we outline common assumptions made within IV designs. First, our notation implicitly assumes the stable unit treatment value assumption \citep{rub80}: both the potential treatments received $d_{ij}(z)$ and potential outcomes $y_{ij}(z)$, $z=0,1$, depend solely on the value of the instrument for individual $ij$ and are not affected by the value of $Z_{i'j'}$ for $i'j'\neq ij$. Next, we assume that the instrument has a non-zero effect on the treatment received, $n^{-1}\sum_{i=1}^n \{d_{ij}(1)-d_{ij}(0)\}\neq 0$, an assumption which may be verified with empirical tests known as weak instrument tests. IV designs with stronger instruments are more robust \citep{Small:2008,Baiocchi:2010,Keele:2014b}, which motivated our use of near-far matching to increase instrument strength.

For the instrumental variable to yield identification of a causal effect, one must assume that instrument assignment is unconfounded once we condition on baseline covariates: $\P\{Z_{ij}=1\mid x_{ij}, d_{ij}(1), d_{ij}(0), y_{ij}(1), y_{ij}(0)\} = \P(Z_{ij}=1 \mid x_{ij})$. For instruments that arise from noncompliance in randomized trials this assumption is satisfied by design, but for naturally occurring IVs this assumption is frequently subject to debate. Since we are focused on emergency admissions it is unlikely that patients themselves are selecting their surgeon, such that assignment to a surgeon with a specific TTO is a plausibly haphazard process. While we employ matching to remove possible bias from haphazard assignment, we cannot ensure that the study results are insensitive to the presence of an unobserved confounding between the instrument and the outcome. This motivates our development in \S \ref{sec:senshet} of a sensitivity analysis to assess robustness to departures from this assumption.

In defining an instrumental variable, \citet{Angrist:1996} further require that the variable satisfies the \textit{exclusion restriction}: the instrument can only affect the outcome by influencing the treatment received, stated formally as $d_{ij}(z) = d_{ij}(z') \Rightarrow y_{ij}(z) = y_{ij}(z')$. Informally, under the exclusion restriction, any effect of a surgeon's TTO on the outcome must only be a consequence of the medical effects of operative management. We bolster this assumption by comparing patients who receive care in the same hospital, which should hold fixed other systemic factors of care that might affect outcomes. Typically, IV studies further invoke a monotonicity assumption of the following form: for all individuals, $d_{ij}(z') \geq d_{ij}(z)$ for $z' \geq z$. Together the exclusion restriction and monotonicity imply that the causal estimand being estimated by an IV design is the average treatment effect among individuals who comply with their encouragement to receive or not receive the treatment, referred to as the \textit{local average treatment effect}.  For preference-based instruments the assumption of stochastic monotonicity is more plausible, which yields a different causal estimand where individuals more strongly influenced by the instrument receive larger weight \citep{small2017instrumental}. See \citet{keeleegsiv2018} for a detailed discussion of the monotonicity assumption in the context of TTO. As will be described in Section \ref{sec:effectratio}, inference may still proceed using a variable which violates monotonicity (be it deterministic or stochastic) and/or the exclusion restriction; however, the estimand will no longer be the local average treatment effect.

\subsection{A model for biased encouragements}

While assignment of surgeons of varying tendencies to operate to patients may have appeared haphazard on the basis of observed covariates, matching can do nothing to preclude hidden biases from influencing effect estimates and hypothesis tests. We employ the model for a sensitivity analysis from \citet[\S 4]{Rosenbaum:2002}. Let $\pi_{ij} = \pr(Z_{ij}=1 \mid \mathcal{F})$. The model proposes a logit form for $\pi_{ij}$ dependent upon a matched-set specific parameter $\kappa_i$ along with the unobserved covariate $u_{ij}$,
\begin{equation}\label{eqn:sensi}
	\log\left(\frac{\pi_{ij}}{1-\pi_{ij}}\right) = \kappa_i + \log(\Gamma)u_{ij},	
\end{equation}
for some $\Gamma \geq 1$. This is equivalent to assuming that for two individuals $j$ and $k$ in the same matched set $i$, $\pi_{ij}$ and $\pi_{ik}$ may differ in the odds ratio scale by at most a factor $\Gamma$, i.e. $\Gamma^{-1}\leq \pi_{ij}(1-\pi_{ik})/\{\pi_{ik}(1-\pi_{ij})\}\leq \Gamma$; see \citet[\S 4, Proposition 12]{Rosenbaum:2002} for a proof. Returning attention to the paired structure by conditioning upon $\mathcal{Z}$, \eqref{eqn:sensi} implies $1/(1+\Gamma) \leq \text{Pr}(Z_{ij}=1\mid \mathcal{F}, \mathcal{Z}) \leq \Gamma/(1+\Gamma)$ for $i=1, \ldots ,n; j=1, 2$. In our study, \eqref{eqn:sensi} suggests that patients in the same matched set could differ in their odds ratio of being assigned to a surgeon with a high TTO by at most $\Gamma$. $\Gamma=1$ corresponds to a randomized encouragement design where $u_{ij}$ does not impact the assignment probabilities \citep{Holland:1988}, while pushing $\Gamma$ beyond one allows unmeasured confounding to increasingly bias the encouragement to treatment. In terms of the conventional IV assumptions, $\Gamma > 1$ allows for a  violation of instrument unconfoundeness conditional upon baseline covariates.

\section{The effect ratio and methods for sensitivity analysis}

\subsection{The effect ratio and its interpretation}
\label{sec:effectratio}

In an IV design, one estimand of interest is the effect ratio. The effect ratio is
\begin{equation}
	\lambda = \frac{\sum_{i=1}^{n}\sum_{j=1}^{2} \{y_{ij}(1) - y_{ij}(0)\}}{\sum_{i=1}^{n}\sum_{j=1}^{2} \{d_{ij}(1) - d_{ij}(0)\}},
\end{equation} 
where it is assumed that $\sum_{i=1}^{n}\sum_{j=1}^{2} \{d_{ij}(1) - d_{ij}(0)\} \neq 0$. It is the ratio of two average treatment effects. In our application the treatment effect in the numerator is the effect of being assigned to a surgeon with high versus low TTO on any one of the health outcomes of interest, while the effect of surgeon TTO on whether or not a patient has surgery is in the denominator. Under the assumption of monotonicity and the exclusion restriction described in \S \ref{sec:IVassume}, $\lambda$ may then be interpreted as the sample average treatment effect among compliers, i.e. the individuals who would undergo surgery if and only if assigned to a surgeon with a higher preference for surgery. Assuming unconfoundedness, \citet{Baiocchi:2010} provide a large-sample method for constructing confidence intervals and performing inference for the effect ratio under effect heterogeneity. 

Table~\ref{tab.out} shows estimated effect ratios and confidence intervals using their method for septic and non-septic patients on our three surgical outcomes. The estimated length of stay among compliers is much longer--4 days for non-septic patients and nearly a week for septic patients. Additionally, we find that compliers that underwent surgery were much more likely to experience a post-operative complication. Among septic patients, the risk of a complication was 25\% higher, and the lower bound on the 95\% confidence interval indicates the risk of a complication was 19\% or higher. Finally, the effect of surgery on mortality among compliers is positive but the confidence intervals are wide and include zero.

\begin{table}[ht]
\centering
\caption{Estimated effect of surgery on EGS patient outcomes. Point estimates for binary outcomes are risk differences expressed as percentages. Brackets are 95\% confidence intervals from inverting the test of \citet{Baiocchi:2010}.}
\label{tab.out}
\begin{tabular}{llcccc}
  \toprule
& & \multicolumn{2}{c}{Septic Patients} & \multicolumn{2}{c}{Non-septic Patients} \\
  \midrule
 & {Complication}  & 25.3 & [ 19.3 , 31.4 ] & 12.0 &[ 8.3 , 15.7 ]  \\ 
&{Length of Stay}& 6.80 & [ 4.4 , 9.2 ] & 4.10 &  [ 3.4 , 4.8 ]\\
 & {Mortality}  & 1.96 & [ -1.9 , 5.9 ] & 0.59 & [ -0.5 , 1.6 ] \\ 
\bottomrule
\end{tabular}
\end{table}



\subsection{Sensitivity analysis for the effect ratio with effects proportional to dose}\label{sec:propdose}

The estimates in Table~\ref{tab.out} are valid with heterogeneous effects under the assumption of no hidden bias. \citet{Rosenbaum:1996} and \citet[\S 5.4]{Rosenbaum:2002} develop methods for exact sensitivity analyses in paired instrumental variable studies under the proportional dose model, which states
 $$
 H_{prop}^{(\lambda_0)} : y_{ij}(1) - y_{ij}(0) = \lambda_0 \{d_{ij}(1) - d_{ij}(0)\}, i=1, \ldots ,n; j=1, 2.
 $$
This extends the model of constant effects common in randomized experiments to randomized encouragement designs. If an individual complies with their encouraged treatment then $y_{ij}(1)-y_{ij}(0)=\lambda_0$, while if an individual defies their encouragement then $y_{ij}(1)-y_{ij}(0) = -\lambda_0$ as $d_{ij}(1)-d_{ij}(0) = -1$. If the encouragement does not influence the treatment received, $y_{ij}(1)-y_{ij}(0)$ is set to zero and hence the exclusion restriction holds. A sensitivity analysis proceeds by finding the worst-case inference for a particular $\Gamma$ in \eqref{eqn:sensi}. One then iteratively increases the value of $\Gamma$ until the null hypothesis can no longer be rejected. \citet{Baiocchi:2010} describe how McNemar's test can be used to test $H_{prop}^{(0)}$ with binary outcomes. 
 
The proportional dose model implies that all individuals who would comply with their assigned encouragement have an identical treatment effect $\lambda$, which precludes essential heterogeneity among other forms of effect heterogeneity. Furthermore, with binary outcomes, the only plausible value for this effect would be zero: $y_{ij}(1)-y_{ij}(0)$ can only take the values -1, 0, and 1, with $\lambda =-1$ or 1 reflecting an extremely strong effect. See \citet[\S 5.1]{Rosenbaum:2002} for a related discussion on the restrictiveness of the constant effect model with binary outcomes outside of encouragement designs. We instead develop a sensitivity analysis for the weaker null hypothesis \begin{align*} H_{weak}^{(\lambda_0)}: \frac{\sum_{i=1}^{n}\sum_{j=1}^{2} \{y_{ij}(1) - y_{ij}(0)\}}{\sum_{i=1}^{n}\sum_{j=1}^{2} \{d_{ij}(1) - d_{ij}(0)\}} = \lambda_0\end{align*} while leaving individual effects unspecified. The proportional dose hypothesis $H_{prop}^{(\lambda_0)}$ is simply one element of $H_{weak}^{(\lambda_0)}$, and need not yield the supremum $p$-value over the composite null. 

\subsection{A valid approach with heterogeneous effects}
\label{sec:senshet}
 Observe that under the null hypothesis, $(2n)^{-1}\sum_{i=1}^n\sum_{j=1}^2[y_{ij}(1)-y_{ij}(0) - \lambda_0\{d_{ij}(1)-d_{ij}(0)\}] = 0$. Define $\zeta^{(\lambda_0)}_{i} = (Z_{i1}-Z_{i2})\{Y_{i1}-Y_{i2} - \lambda_0(D_{i1}-D_{i2})\}$ as the encouraged-minus-non encouraged differences in the terms $Y_{ij}-\lambda_0D_{ij}$, which may be thought of as the observed outcome adjusted for the treatment level received for each individual $ij$. Observe that at $\Gamma=1$, $E(\zeta_i^{(\lambda_0)}\mid \cF, \cZ) = (1/2)\sum_{j=1}^2[y_{ij}(1)-y_{ij}(0) - \lambda_0\{d_{ij}(1)-d_{ij}(0)\}]$, and that $n^{-1}\sum_{i=1}^n\zeta_i^{(\lambda_0)}$ hence forms an unbiased estimating equation for $\lambda_0$ in a randomized encouragement design.  Further observe that under $H_{prop}^{(\lambda_0)}$, the absolute value $|\zeta_i^{(\lambda_0)}|$ is fixed across randomizations, with only its sign varying. This observation underpins the use of permutation-based methods for sensitivity inference described in \citet{Rosenbaum:2002}; however, for other elements of $H_{weak}^{(\lambda_0)}$, the value of $|\zeta_i^{(\lambda_0)}|$ may itself be random over $\bz\in \Omega$.

Suppose one wants to conduct a sensitivity analysis for whether the effect ratio equals $\lambda_0$ at level of unmeasured confounding $\Gamma$ without assuming proportional doses. Define $L_{\Gamma i}$ as\begin{align*}L_{\Gamma i}= \zeta^{(\lambda_0)}_{i} - \left(\frac{\Gamma-1}{1+\Gamma}\right)|\zeta^{(\lambda_0)}_{i}|.
\end{align*} 
The term $\{(\Gamma-1)/(1+\Gamma)\}|\zeta^{(\lambda_0)}_{i}|$ is the worst-case expectation for $\zeta^{(\lambda_0)}_i$ at $\Gamma$ if the proportional dose model actually held. As proven in \citet{Fogarty:2019}, this provides an upper bound even if effects are heterogeneous.  Based upon the $n$ random variables $L_{\Gamma i}$, define $\text{se}(\bar{L}_\Gamma)$ to be the conventional standard error for a paired design, $\text{se}(\bar{L}_\Gamma)^2 = \{n(n-1)\}^{-1}\sum_{i=1}^n(L_{\Gamma i} - \bar{L}_\Gamma)^2$, and consider using as a test statistic with a greater-than alternative $T(\mathbf{Z}, \mathbf{Y}-\lambda_0\mathbf{D}) =  \bar{L}_\Gamma/\text{se}(\bar{L}_\Gamma)$. 

We now construct a reference distribution for the sensitivity analysis. When testing with a greater-than alternative, the sensitivity analysis bounds the upper-tail probability for the employed test statistic. Let $V_{\Gamma i}$ $(i=1,...,n)$ be conditionally independent given $\cF, \cZ$ and take values $\pm 1$ with $\P(V_{\Gamma i} = 1\mid \cF, \cZ) = \Gamma/(1+\Gamma)$, and define the random variable $B_{\Gamma i}= V_{\Gamma i}|\zeta^{(\lambda_0)}_{i}| - \{(\Gamma-1)/(1+\Gamma)\}|\zeta^{(\lambda_0)}_{i}|.$ The construction of $B_{\Gamma i}$ is motivated by the proportional dose model: under $H_{prop}^{(\lambda_0)}$, $B_{\Gamma i}$ stochastically dominates $L_{\Gamma i}$ if (\ref{eqn:sensi}) holds at $\Gamma$. Consider using as a reference distribution the distribution of $\bar{B}_\Gamma/\text{se}(\bar{B}_\Gamma)$ given $|\boldsymbol\zeta^{(\lambda_0)}|$; call its distribution function $G_\Gamma(\cdot)$. The decision whether or not to reject the null hypothesis in a sensitivity analysis at $\Gamma$ using the studentized statistic $\bar{L}_\Gamma/\text{se}(\bar{L}_\Gamma)$ is \begin{align}\label{eq:procedure}
\varphi^{(\lambda_0)}(\alpha,\Gamma) &= \1\{\bar{L}_\Gamma/\text{se}(\bar{L}_\Gamma) \geq G_\Gamma^{-1}(1-\alpha)\},
\end{align}  
where $G_\Gamma^{-1}(q) = \inf \{k : G_\Gamma(k) \geq q\}$ is the $q$th quantile of the distribution of $\bar{B}_\Gamma/\text{se}(\bar{B}_\Gamma)$ given $|\boldsymbol\zeta^{(\lambda_0)}|$. See Algorithm 2 of \citet{Fogarty:2019} for more on constructing $G_\Gamma(\cdot)$.\begin{proposition}Suppose that $H_{weak}^{(\lambda_0)}$ is true and that (\ref{eqn:sensi}) holds at $\Gamma$. Then, under mild regularity conditions and for $\alpha \leq 0.5$, ${\lim}_{n\rightarrow \infty}\; E\{\varphi^{(\lambda_0)}(\alpha,\Gamma)\mid \cF, \cZ\}\leq \alpha$.\label{prop:stu}\end{proposition}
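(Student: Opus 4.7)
I would combine four asymptotic ingredients---three concerning the test statistic $\bar L_\Gamma/\text{se}(\bar L_\Gamma)$ and one concerning the reference critical value $G_\Gamma^{-1}(1-\alpha)$: (a) $\sup_{\mathbf p} E(\bar L_\Gamma\mid \cF,\cZ) \le 0$ under $H_{weak}^{(\lambda_0)}$, where the supremum is taken over $p_{i1}:=\pr(Z_{i1}=1\mid\cF,\cZ)\in[1/(1+\Gamma),\Gamma/(1+\Gamma)]$ for each $i$; (b) a Lindeberg--Feller CLT for the pairwise-independent summands $L_{\Gamma i}$, giving asymptotic normality of $\sqrt n\{\bar L_\Gamma - E(\bar L_\Gamma\mid \cF,\cZ)\}$; (c) $n\,\text{se}(\bar L_\Gamma)^2$ conservatively estimates $n\,\text{Var}(\bar L_\Gamma\mid \cF,\cZ)$ in probability; and (d) $G_\Gamma^{-1}(1-\alpha)\xrightarrow{p} z_{1-\alpha}$ via a conditional CLT for $\bar B_\Gamma/\text{se}(\bar B_\Gamma)$ given $|\boldsymbol\zeta^{(\lambda_0)}|$. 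Combining (a)--(d) via Slutsky's theorem, with $\alpha\le 0.5$ ensuring $z_{1-\alpha}\ge 0$, then delivers the conclusion.

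Step (a) is the heart of the argument and the main obstacle. Set $r_{ij}(z) = y_{ij}(z)-\lambda_0 d_{ij}(z)$, $A_i = r_{i1}(1) - r_{i2}(0)$, and $B_i = r_{i1}(0) - r_{i2}(1)$; then $\zeta_i^{(\lambda_0)}$ equals $A_i$ when $Z_{i1}=1$ and equals $-B_i$ when $Z_{i2}=1$, while $H_{weak}^{(\lambda_0)}$ is equivalent to $\sum_i(A_i-B_i)=0$. Writing $a_i = A_i - \tfrac{\Gamma-1}{1+\Gamma}|A_i|$ and $\tilde b_i = -B_i - \tfrac{\Gamma-1}{1+\Gamma}|B_i|$, I have $E(L_{\Gamma i}\mid \cF,\cZ) = p_{i1}a_i + (1-p_{i1})\tilde b_i$, which is maximized over the feasible interval at $\{a_i+\tilde b_i+(\Gamma-1)\max(a_i,\tilde b_i)\}/(1+\Gamma)$. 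Using $\sum_i(a_i+\tilde b_i) = \sum_i(A_i-B_i) - \tfrac{\Gamma-1}{1+\Gamma}\sum_i(|A_i|+|B_i|)$, which reduces under the null to $-\tfrac{\Gamma-1}{1+\Gamma}\sum_i(|A_i|+|B_i|)$, the sum of pairwise maxima collapses to $\tfrac{\Gamma-1}{(1+\Gamma)^2}\sum_i\{(1+\Gamma)\max(a_i,\tilde b_i)-|A_i|-|B_i|\}$. This reduces (a) to the pointwise inequality $(1+\Gamma)\max(a_i,\tilde b_i)-|A_i|-|B_i|\le A_i-B_i$, which I would verify by a short case analysis on the four sign patterns of $(A_i,B_i)$; a second appeal to $\sum_i(A_i-B_i)=0$ then completes the bound.

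For (c), independence of the $L_{\Gamma i}$ across $i$ yields $E\{n\,\text{se}(\bar L_\Gamma)^2\mid \cF,\cZ\} = n\,\text{Var}(\bar L_\Gamma\mid \cF,\cZ) + (n-1)^{-1}\sum_i\{E(L_{\Gamma i}\mid \cF,\cZ)-\bar\mu_L\}^2$ with $\bar\mu_L = n^{-1}\sum_{i'}E(L_{\Gamma i'}\mid \cF,\cZ)$; the non-negative ``heterogeneity-of-means'' correction is precisely what buys asymptotic conservativeness once concentration of $n\,\text{se}(\bar L_\Gamma)^2$ around its mean (ensured by a uniform bound on the potential outcomes) is invoked. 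Steps (b) and (d) are routine given bounded summands; the reference CLT additionally uses that conditional on $|\boldsymbol\zeta^{(\lambda_0)}|$, the $B_{\Gamma i}$ are independent with mean zero and variance $\{4\Gamma/(1+\Gamma)^2\}|\zeta_i^{(\lambda_0)}|^2$. To close, decompose $\bar L_\Gamma/\text{se}(\bar L_\Gamma) = E(\bar L_\Gamma\mid \cF,\cZ)/\text{se}(\bar L_\Gamma) + \{\bar L_\Gamma - E(\bar L_\Gamma\mid \cF,\cZ)\}/\text{se}(\bar L_\Gamma)$: step (a) makes the first summand non-positive, (b) and (c) combined with Slutsky show the second is asymptotically stochastically dominated by $N(0,1)$, and (d) replaces the critical value with $z_{1-\alpha}\ge 0$; contracting the asymptotic variance from $1$ cannot inflate the upper-tail probability at a non-negative threshold, so the rejection probability is at most $\alpha$ in the limit. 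The greatest difficulty lies in the sign-case analysis of step (a): the correction $-\tfrac{\Gamma-1}{1+\Gamma}|\zeta_i^{(\lambda_0)}|$ was tailored to neutralize the worst-case adversary under $H_{prop}^{(\lambda_0)}$ (where $|\zeta_i^{(\lambda_0)}|$ is deterministic), and showing that it continues to suffice only in aggregate under the composite $H_{weak}^{(\lambda_0)}$ hinges on the non-trivial pointwise dominance of the pairwise maxima by $A_i-B_i$.
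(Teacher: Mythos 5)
Your proposal is correct and follows essentially the same architecture as the paper's proof (which defers to Theorems 1--2 and Lemmas 2--3 of Fogarty, 2019): bound the worst-case expectation $E(\bar L_\Gamma\mid\cF,\cZ)$ by zero, bound $\var(\bar L_\Gamma\mid\cF,\cZ)$ by $E\{\se(\bar L_\Gamma)^2\mid\cF,\cZ\}$, apply a CLT to the studentized statistic, and show the randomization reference distribution converges to the standard normal so that $G_\Gamma^{-1}(1-\alpha)$ may replace $z_{1-\alpha}\geq 0$ for $\alpha\leq 0.5$. The only place you go beyond the paper's sketch is step (a), where you rederive the bias bound from scratch rather than citing it; your reduction to the pointwise inequality $(1+\Gamma)\max(a_i,\tilde b_i)-|A_i|-|B_i|\leq A_i-B_i$ is correct, since it follows in the two cases from $\Gamma(A_i-|A_i|)\leq |B_i|-B_i$ and $-\Gamma(B_i+|B_i|)\leq A_i+|A_i|$, so the argument closes.
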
\vspace{-.2 in}The proof of Proposition \ref{prop:stu} mirrors that of Theorem 2 in \citet{Fogarty:2019} and is sketched in the web-based supplement. For testing the effect ratio without assuming proportional doses, regularity conditions are required to ensure that $n\times \text{se}(\bar{L}_\Gamma)^2$ converges in probability to a limiting value, and that a central limit theorem applies to $\sqrt{n}\{\bar{L}_\Gamma - E(\bar{L}_\Gamma\mid \cF, \cZ)\}$.
\begin{remark} At $\Gamma=1$, our method is asymptotically equivalent to that of \citet{Baiocchi:2010}. The test statistic proposed in that work for inference at $\Gamma=1$ is precisely $\bar{L}_1/\text{se}(\bar{L}_1)$, and the methods differ solely in the reference distribution employed. \citet{Baiocchi:2010} use the Normal distribution, while our approach uses a reference distribution generated by biased permutations of the observed data under the assumption of proportional doses in order to maintain exactness under that model for effects. Under suitable regularity conditions our reference distribution $G_\Gamma(k)$ converges in probability to $\Phi(k)$, the CDF of a standard Normal, pointwise at all points $k$. Their method can be viewed as a large-sample approximation of our method at $\Gamma=1$, and replacing our reference distribution $G_\Gamma(k)$ at $\Gamma > 1$ with the standard Normal $\Phi(k)$ would provide the natural extension of their large-sample approach for inference on the effect ratio to a sensitivity analysis.

\end{remark}

\subsection{Binary responses: Equivalence with McNemar's test and testing nonzero nulls}

\citet{Baiocchi:2010} suggest conducting a sensitivity analysis using McNemar's test while restricting attention to the narrower null hypothesis $H_{prop}^{(0)}$, amounting to a test of Fisher's sharp null of no effect. One may be concerned that this sensitivity analysis would be misleading if instead $H_{weak}^{(0)}$ held, such that the effect ratio equaled zero but Fisher's sharp null did not hold. As we now demonstrate, McNemar's test is actually equivalent to the test $\varphi^{(0)}(\alpha,\Gamma)$ presented in the previous section with $\lambda_0=0$, and hence also provides a sensitivity analysis that is both finite-sample exact for $H_{prop}^{(0)}$ and asymptotically correct for $H_{weak}^{(0)}$. 
\begin{proposition}\label{thm.equiv} 
Suppose that outcomes are binary and one employs a sensitivity analysis based on McNemar's test statistic, $T_{M}(\mathbf{Z}, \mathbf{Y}) = \sum_{i=1}^n\sum_{j=1}^2Z_{ij}Y_{ij},$ using its worst-case distribution under the assumption of $H^{(0)}_{prop}$ as described in \citet{Rosenbaum:1987} and \citet{Baiocchi:2010}. Denote the resulting candidate level$-\alpha$ sensitivity analysis by $\varphi_M(\alpha, \Gamma)$. Then, for any observed vector $\bZ \in \Omega$ and any vector of binary responses $\bY$, $\varphi_M(\alpha, \Gamma) = \varphi^{(0)}(\alpha,\Gamma).$ That is, the two sensitivity analyses are equivalent, furnishing identical $p$-values for all $\Gamma$.
\end{proposition}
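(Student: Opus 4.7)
The approach is to show that both sensitivity analyses reduce to comparing a single discordant-pair count against the same binomial reference distribution. The reduction rests on a strict monotonicity of the studentized statistic in the discordant-pair count.

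\textbf{Setup.} With binary $Y$ and $\lambda_0=0$, the difference $\zeta_{i}^{(0)}=(Z_{i1}-Z_{i2})(Y_{i1}-Y_{i2})$ takes values in $\{-1,0,+1\}$, vanishing on concordant pairs and equal to $\pm1$ on discordant pairs. Let $N_d$ be the observed number of discordant pairs and $T^+$ the number of them with $\zeta_{i}^{(0)}=+1$. For any $\bZ\in\Omega$, McNemar's statistic satisfies $T_{M}=T^++n_{11}$, where $n_{11}$, the number of pairs with both outcomes equal to one, is fixed by $\bY$. The worst-case construction of Rosenbaum (1987) assigns $T^+\sim\text{Binom}(N_d,\Gamma/(1+\Gamma))$, so $\varphi_{M}(\alpha,\Gamma)$ rejects iff $T^+$ is at least the $(1-\alpha)$ quantile of this binomial distribution.

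\textbf{Matching the test statistics.} A direct computation shows that $L_{\Gamma i}$ equals $2/(1+\Gamma)$ when $\zeta_i^{(0)}=+1$, equals $-2\Gamma/(1+\Gamma)$ when $\zeta_i^{(0)}=-1$, and is zero on concordant pairs. Hence $\bar{L}_\Gamma$ and $\text{se}(\bar{L}_\Gamma)$ depend on the data only through the triple $(T^+,N_d,n)$, and the observed studentized statistic equals $\psi(T^+)$ for an explicit function $\psi$. The same decomposition applies to $B_{\Gamma i}$: it vanishes on concordant pairs and otherwise equals $2/(1+\Gamma)$ with probability $\Gamma/(1+\Gamma)$ or $-2\Gamma/(1+\Gamma)$ with probability $1/(1+\Gamma)$. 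Letting $W$ denote the number of discordant pairs with $V_{\Gamma i}=+1$, we have $W\sim\text{Binom}(N_d,\Gamma/(1+\Gamma))$ in the reference construction defining $G_\Gamma$, and the reference studentized statistic equals $\psi(W)$ for the same $\psi$.

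\textbf{The crux: strict monotonicity of $\psi$.} With $S(t)=2t-2\Gamma N_d/(1+\Gamma)$ and $Q(t)=4[t+\Gamma^2(N_d-t)]/(1+\Gamma)^2$, differentiating $\psi(t)=\sqrt{n-1}\,S(t)/\sqrt{nQ(t)-S(t)^{2}}$ yields
\begin{equation*}
\psi'(t)=\frac{n\sqrt{n-1}}{2\left(nQ(t)-S(t)^{2}\right)^{3/2}}\cdot\left(2S'(t)Q(t)-S(t)Q'(t)\right),
\end{equation*}
and after algebraic simplification the second factor reduces to $8[\Gamma N_d-(\Gamma-1)t]/(1+\Gamma)$. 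The bracket is strictly positive for every $t\in\{0,1,\ldots,N_d\}$: trivially when $\Gamma=1$, and because $\Gamma N_d/(\Gamma-1)>N_d\geq t$ when $\Gamma>1$. The primary obstacle is this algebraic simplification of $2S'Q-SQ'$; handling it carefully also disposes of the only genuinely degenerate configuration $nQ-S^{2}=0$, which forces all nonzero $L_{\Gamma i}$ to agree, in which case both sensitivity analyses reject on identical events by direct inspection.

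\textbf{Conclusion.} Strict monotonicity of $\psi$ on $\{0,1,\ldots,N_d\}$ gives $\{w:\psi(w)\geq\psi(T^+)\}=\{w:w\geq T^+\}$, so the $p$-value returned by $\varphi^{(0)}(\alpha,\Gamma)$ equals $\pr(W\geq T^+\mid|\boldsymbol\zeta^{(0)}|)$ with $W\sim\text{Binom}(N_d,\Gamma/(1+\Gamma))$, which is precisely McNemar's sensitivity-analysis $p$-value. The two procedures therefore reject on the same event for every $\bZ\in\Omega$ and every binary $\bY$, establishing $\varphi_M(\alpha,\Gamma)=\varphi^{(0)}(\alpha,\Gamma)$.
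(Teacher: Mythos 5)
Your proof is correct and follows essentially the same route as the paper's: both reduce the studentized statistic to a function of the discordant-pair success count and establish strict monotonicity by differentiation, with your positivity condition $\Gamma N_d-(\Gamma-1)t>0$ being exactly the paper's condition $c \geq (b/2)x$ after the change of variables $x = 2(T_{\mathcal{D}}-\theta_\Gamma|\mathcal{D}|)$. The only differences are cosmetic --- you parameterize by $T^{+}$ rather than $x$, and you are slightly more explicit than the paper about the binomial form of the reference distribution and the degenerate case $nQ-S^{2}=0$.
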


The proof of Proposition \ref{thm.equiv} is presented in the web-based supplement, and requires showing that over elements of $\Omega$, $\bar{L}_\Gamma/\text{se}(\bar{L}_\Gamma)$ is a strictly increasing function of McNemar's statistic $T_M(\bZ, \bY) = \sum_{i=1}^nZ_{ij}Y_{ij}$, the number of encouraged individuals for whom an event occurred. Because $\varphi^{(0)}(\alpha,\Gamma)$ is valid under the weak null of no effect ratio, the sensitivity analysis for binary outcomes presented in  \citet{Baiocchi:2010}, motivated under Fisher's sharp null, is also an asymptotically correct sensitivity analysis for the effect ratio equaling zero in the presence of effect heterogeneity. Sensitivity analyses using McNemar's tests are ubiquitous in paired observational studies with binary outcomes, and Proposition \ref{thm.equiv} provides a useful fortification for those analyses should a critic be concerned about effect heterogeneity. 

The equivalence with McNemar's test pertains to the test that the effect ratio equals zero. The procedure $\varphi^{(\lambda_0)}(\alpha,\Gamma)$ remains asymptotically valid for testing $H_{weak}^{(\lambda_0)}$ for $\lambda_0\neq 0$ even with binary outcomes. Hence, the test  $\varphi^{(\lambda_0)}(\alpha,\Gamma)$ can be inverted to provide asymptotically valid sensitivity intervals for the effect ratio with binary outcomes.

\subsection{Application to the study of surgical outcomes for EGS patients}

The estimates in Table~\ref{tab.out} assumes that patients are assigned to surgeons in an as-if random fashion after accounting for observed covariates. Since our IV is not the result of a randomized encouragement, we cannot rule out the presence of hidden bias from IV--outcome confounders.  In Table~\ref{tab.sens}, we summarize the sensitivity analysis using the changepoint value -- the value of $\Gamma$ at which the estimate is no longer statistically significant at $\alpha=0.05$ \citep{Zhao:2019}.

\begin{table}[ht]
\centering
\caption{Sensitivity analysis for the effect of surgery on EGS patient outcomes. Cell entries are sensitivity values when testing at $\alpha=0.05$, the largest values of $\Gamma$ for which the null of zero effect ratio is rejected \citep{Zhao:2019}.}
\label{tab.sens}
\begin{tabular}{lcc}
  \toprule
 & Septic Patients & Non-septic Patients \\
\midrule
Complication  & 1.56 & 1.59  \\
     Length of Stay  & 1.64 & 2.36 \\ 
     Mortality  & 1.00 & 1.00 \\ 
\bottomrule
\end{tabular}
\end{table}

Despite the large point estimates for the complications outcome, we find that a somewhat modest amount of confounding could explain this result. For example, for non-septic patients an unobserved covariate would have to increase the odds of treatment by a surgeon with a high TTO by a factor of 1.59 within matched pairs in order to overturn our finding of a positive effect ratio at $\alpha = 0.05$. Table~\ref{tab.sens} illustrates that when effect modification is present, hypothesis tests for different subgroups may vary not only in their degree of statistical significance assuming $\Gamma=1$, but also in their sensitivity to bias should the test at $\Gamma=1$ reject. For related results, see also  \citet{lee2018discovering}. While the point estimates are larger for septic patients as displayed in Table \ref{tab.out}, the $\Gamma$ changepoint values are larger in the non-septic group for both complication and hospital length of stay. For septic patients an unobserved covariate would have to increase the odds of treatment by a surgeon with a high TTO by a factor of 1.64 within matched pairs in order to overturn our finding of a positive effect ratio, whereas a factor of 2.36 would be required for non-septic patients.

Sepsis is but one of many patient-level health characteristics which might influence whether or not surgery will be more effective than non-surgical alternatives. We next explore how accounting for additional heterogeneity that is predictable on the basis of effect modifiers may further decrease reported sensitivity to hidden bias. After doing so, we present theoretical results to help explain the pattern in Table \ref{tab.sens} of smaller treatment effects being considerably less sensitive to hidden bias.

\section{Improved standard errors by exploiting effect modification}
\label{sec:var}

\subsection{Conservativeness of finite population sensitivity analysis under effect heterogeneity}

Proposition \ref{prop:stu} tells us that rejecting  through the test $\varphi^{(\lambda_0)}(\alpha,\Gamma)$ provides an asymptotically valid sensitivity analysis, in the sense that if (\ref{eqn:sensi}) holds at $\Gamma$ then the test will commit a Type I error with probability at most $\alpha$ in the limit. In practice, it could be that $\varphi^{(\lambda_0)}(\alpha,\Gamma) \ll \alpha$ under $H_{weak}^{(\lambda_0)}$, resulting in a conservative test. Imagine that we had access to $E(\bar{L}_\Gamma\mid \cF, \cZ)$ and $\sd(\bar{L}_\Gamma\mid \cF, \cZ)$, the \textit{true} expectation and standard deviation of the conditional distribution $\bar{L}_\Gamma \mid \cF, \cZ$. In this case, rejecting the null hypothesis when the test statistic $\{\bar{L}_\Gamma - E(\bar{L}_\Gamma\mid \cF, \cZ)\}/\sd(\bar{L}_\Gamma\mid \cF, \cZ)$ exceeds $\Phi^{-1}(1-\alpha)$ would, under mild conditions ensuring that a central limit theorem holds, result in a test statistic with asymptotic level exactly equal to $\alpha$. Unfortunately, this test statistic cannot be deployed in practice. For $\Gamma > 1$ the expectation $E(\bar{L}_\Gamma\mid \cF, \cZ)$ is unknown to its dependence on both the vector of unmeasured confounders $\mathbf{u}$ and the missing potential outcomes, which are not imputed under $H_{weak}^{(\lambda_0)}$. Further, for any value of $\Gamma$, $\sd(\bar{L}_\Gamma\mid \cF, \cZ)$ is also unknown when effects are heterogeneous. By using the test statistic $\bar{L}_\Gamma/\text{se}(\bar{L}_\Gamma)$, our procedure overcomes these difficulties by means of the bounds $E(\bar{L}_\Gamma\mid \cF, \cZ) \leq 0$ and $\var(\bar{L}_\Gamma\mid \cF, \cZ) \leq E\{\se(\bar{L}_\Gamma)^2\mid \cF, \cZ\}$ \citep[Lemmas 2-3]{Fogarty:2019}. While we cannot generally hope to improve upon the bound $E(\bar{L}_\Gamma\mid \cF, \cZ)\leq 0$ for $\Gamma>1$ without risking an anti-conservative procedure, we illustrate that improvements on $\text{se}(\bar{L}_\Gamma)$ \textit{can} be attained while preserving the asymptotic level.

\subsection{A general construction of valid standard errors for sensitivity analysis}
Extending Lemma 3 of \citet{Fogarty:2019} to encouragement designs, we have \begin{align*}
E\{\text{se}(\bar{L}_\Gamma)^2\mid \cF, \cZ\} &= \var(\bar{L}_\Gamma \mid \cF, \cZ) + \frac{1}{n(n-1)}\sum_{i=1}^n(\mu_{\Gamma i} - \bar{\mu}_\Gamma)^2,
\end{align*}
where $\mu_{\Gamma i} = E(L_{\Gamma i}\mid \cF, \cZ)$ is the true, but unknowable, expectation of $L_{\Gamma i}$ and $\bar{\mu}_\Gamma = n^{-1}\sum_{i=1}^n\mu_{\Gamma i}$. The bias in $\text{se}(\bar{L}_\Gamma)$ thus depends upon the degree of heterogeneity in the expectations $\mu_{\Gamma i}$. We can interpret the magnitude of the bias as the mean squared error estimate from a regression of $\mu_{\Gamma i}$ on an intercept column $\mathbf{1}_{n}$, a vector containing $n$ ones. Imagine now that we were able to account for some of the variation in $\mu_{\Gamma i}$ through covariance adjustment. This would reduce the mean squared error, and hence the degree of bias.

Let $\mathbf{Q}$ be an $n \times p$ matrix with $p < n$ that is constant across all $\bz \in \Omega$, and let $\mathbf{H} = \bQ (\bQ^T \bQ)^{-1} \bQ^T$ be the orthogonal projection of $\mathbb{R}^n$ onto the column space of $\bQ$. Let $h_{ik}$ be the $\{i, k\}$ element of $\bH$ and let $\tilde{L}_{\Gamma i} = L_{ \Gamma i}/\sqrt{1-h_{ii}}$. Define $\text{se}(\bar{L}_\Gamma; \bQ)^2$ as\begin{align}
	\text{se}(\bar{L}_\Gamma; \bQ)^2 = \frac{1}{n^2} \mathbf{\tilde{L}_{\Gamma}}^T (\mathbf{I} -\mathbf{H}) \mathbf{\tilde{L}_{\Gamma}},
	\label{eqn:new_var}
\end{align}   
where $\bI$ is the $n \times n$ identity matrix and $\mathbf{\tilde{L}_\Gamma} = (\tilde{L}_{\Gamma 1}, \ldots, \tilde{L}_{\Gamma n})^T$. 

\begin{proposition}
For any value of $\Gamma$, $E\{\text{se}(\bar{L}_\Gamma; \bQ)^2 \mid \mathcal{F}, \mathcal{Z}\} - \var(\bar{L}_\Gamma \mid \mathcal{F}, \mathcal{Z}) = n^{-2} \bm{\tilde{{\mu}}_\Gamma}^T (\bI - \bH) \bm{\tilde{{\mu}}_\Gamma} \geq 0$,
	where $\bm{\tilde{{\mu}}_\Gamma} = (\tilde{\mu}_{\Gamma 1}, \ldots, \tilde{\mu}_{\Gamma n})^T$, and $\tilde{\mu}_{\Gamma i} = \mu_{\Gamma i}/\sqrt{1-h_{ii}}$.
	\label{prop:var1}
\end{proposition}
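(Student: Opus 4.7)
The plan is to compute $E\{\mathbf{\tilde{L}}_\Gamma^T(\bI - \bH)\mathbf{\tilde{L}}_\Gamma \mid \cF, \cZ\}$ directly, exploiting the fact that the $L_{\Gamma i}$ are mutually independent across pairs $i$ conditional on $\cF,\cZ$. This independence follows from the paired design: conditional on $\cZ$, the assignment vectors $(Z_{i1}, Z_{i2})$ are independent across $i$, and $L_{\Gamma i}$ is a function only of pair $i$'s assignments and potential outcomes. In particular, $\var(\bar{L}_\Gamma \mid \cF, \cZ) = n^{-2}\sum_{i=1}^{n}\sigma_{\Gamma i}^2$, where $\sigma_{\Gamma i}^2 = \var(L_{\Gamma i}\mid \cF, \cZ)$, and $\var(\tilde{L}_{\Gamma i}\mid \cF, \cZ) = \sigma_{\Gamma i}^2/(1-h_{ii})$.

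Next I would expand the quadratic form componentwise, using $(\bI - \bH)_{ii} = 1 - h_{ii}$ and $(\bI - \bH)_{ik} = -h_{ik}$ for $i \neq k$. Using $E(\tilde{L}_{\Gamma i}^2) = \sigma_{\Gamma i}^2/(1 - h_{ii}) + \tilde{\mu}_{\Gamma i}^2$ for the diagonal contributions and $E(\tilde{L}_{\Gamma i}\tilde{L}_{\Gamma k}) = \tilde{\mu}_{\Gamma i}\tilde{\mu}_{\Gamma k}$ for the off-diagonal contributions (by independence), I obtain
\begin{align*}
E\{\mathbf{\tilde{L}}_\Gamma^T(\bI - \bH)\mathbf{\tilde{L}}_\Gamma \mid \cF,\cZ\} &= \sum_{i=1}^n (1-h_{ii})\left[\frac{\sigma_{\Gamma i}^2}{1-h_{ii}} + \tilde{\mu}_{\Gamma i}^2\right] - \sum_{i \neq k} h_{ik}\tilde{\mu}_{\Gamma i}\tilde{\mu}_{\Gamma k} \\
&= \sum_{i=1}^n \sigma_{\Gamma i}^2 + \bm{\tilde{\mu}}_\Gamma^T (\bI - \bH) \bm{\tilde{\mu}}_\Gamma.
\end{align*}
The key cancellation here is that the studentization factor $1/\sqrt{1-h_{ii}}$ on the diagonal is exactly what kills the leverages in the variance contribution, leaving $\sum_i \sigma_{\Gamma i}^2$.

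Dividing by $n^2$ and recognizing $n^{-2}\sum_i \sigma_{\Gamma i}^2 = \var(\bar{L}_\Gamma \mid \cF, \cZ)$ yields the identity claimed in the proposition. Finally, since $\bH$ is an orthogonal projection, so is $\bI - \bH$, hence $\bI - \bH$ is positive semidefinite and the quadratic form $\bm{\tilde{\mu}}_\Gamma^T(\bI-\bH)\bm{\tilde{\mu}}_\Gamma \geq 0$, giving the inequality.

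I do not anticipate a real obstacle here; the only subtlety is recognizing the role of the leverage correction, which is what makes this construction the natural ``HC3-style'' analogue of the ordinary paired standard error $\text{se}(\bar{L}_\Gamma)$ (which corresponds to taking $\bQ = \mathbf{1}_n$ up to a degrees-of-freedom correction). Once the independence of $L_{\Gamma i}$ across pairs and the appropriate studentization are in hand, the computation is essentially bookkeeping.
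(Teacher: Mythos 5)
Your proof is correct and follows essentially the same route as the paper's: the paper invokes the quadratic-form identity $E(\mathbf{\tilde{L}}_\Gamma^T(\bI-\bH)\mathbf{\tilde{L}}_\Gamma) = \tr\{(\bI-\bH)\bm{\Lambda}\} + \bm{\tilde{\mu}}_\Gamma^T(\bI-\bH)\bm{\tilde{\mu}}_\Gamma$ with $\bm{\Lambda}$ diagonal by independence across pairs, which is exactly your componentwise expansion, and the leverage cancellation $\tr\{(\bI-\bH)\bm{\Lambda}\} = \sum_i(1-h_{ii})\var(L_{\Gamma i})/(1-h_{ii}) = \sum_i\var(L_{\Gamma i})$ is the same step you identify. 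The concluding positive-semidefiniteness argument is also identical.
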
The proof of Proposition \ref{prop:var1} follows that of Proposition 1 in \citet{Fogarty:2018}. Regardless of the true value of $\Gamma$ for which (\ref{eqn:sensi}) holds and regardless of the form for the matrix $\bQ$, $\text{se}(\bar{L}_\Gamma, \bQ)^2$ will be conservative in expectation for $\var(\bar{L}_\Gamma \mid \mathcal{F}, \mathcal{Z})$ so long as $\bQ$ does not vary across $\Omega$.

\subsection{A regression-based standard error} 
Setting $\bQ = \mathbf{1}_n$ recovers the usual standard error estimator. Guided by the form of the bias, we see that $\bQ$ should be chosen to predict the individual-level expectations $\tilde{\mu}_{\Gamma i} \approx \mu_{\Gamma i}$.  One choice would be to let $\bQ = (\mathbf{1}, \bar{\mathbf{X}})$ be a $n \times (k+1)$ matrix where the $q$th column of $\bar{\mathbf{X}}$ is $(\bar{x}_{1q}, \ldots, \bar{x}_{nq})^T$, and where $\bar{x}_{iq} = (x_{i1q} + x_{i2q})/2$, the average of the $q$th covariate's values in the $i$th pair. This reflects a hope that the heterogeneous expectations $\mu_{\Gamma i}$ may be linear in the within-pair covariate averages. Under mild regularity conditions, it can be shown that the modified variance estimator $\text{se}(\bar{L}_\Gamma; \bQ_{reg})$ with $\bQ_{reg}= (\mathbf{1}_n, \bar{\mathbf{X}})$ is asymptotically never worse than $\se(\bar{L}_\Gamma)$, \textit{regardless} of whether or not $\bm{\mu_\Gamma}$ is truly linear in the within-pair covariate averages $\bar{\mathbf{X}}$. 
\begin{proposition}
	Under regularity conditions and defining 0/0 = 1,\begin{align*}
	\frac{\text{se}(\bar{L}_\Gamma; \bQ_{reg})^2 - \var(\bar{L}_{\Gamma} \mid \mathcal{F}, \mathcal{Z})}{\se(\bar{L}_\Gamma; \mathbf{1}_n)^2 - \var(\bar{L}_{\Gamma} \mid \mathcal{F}, \mathcal{Z})}&\xrightarrow{p} 1-R^2,
	\end{align*}
	where $R^2$ is the coefficient of determination from a regression of $\bm{\mu_\Gamma}$ on $(\bm{1_n}, \bar{\mathbf X})$.
	\label{prop:var2}
\end{proposition}
Sufficient conditions for Proposition \ref{prop:var2} are presented along with its proof in the supplementary materials. The conclusion of Proposition \ref{prop:var2} would hold replacing $\bm{\tilde{L}_\Gamma}$ with $\bm{L_\Gamma}$, and with $1/n^2$ replaced by $1/\{n(n-k-1)\}$ in (\ref{eqn:new_var}). In that case, the resulting standard error estimator would simply be the $RMSE$ from a regression of $\bm{L_\Gamma}$ on $(\bm{1_n}, {\bar{\mathbf X}})$ divided by $\sqrt{n}$.

\subsection{A nonparametric approach: pairing the pairs through nonbipartite matching}\label{sec:pop}
While the regression-based standard error does not require a properly specified linear model for its validity, the gains are dependent on the extent to which $\bm{\mu_\Gamma}$ is linear in $\bQ$. While averages of polynomial terms can be included within $\bar{\bX}$, alternative nonparametric approaches may be preferred. We now illustrate that a variant of the approach considered in \citet{Abadie:2008} can also be employed in a finite population sensitivity analysis. 

Suppose $n$ is even and consider ``pairing the pairs" through the use of nonbipartite matching based on Mahalanobis distances between the average values of the within-pair covariates $\bar{\mathbf{x}}_i$. This results in $n/2$ ``pairs of pairs,'' where pairs have been matched to other pairs with similar average values for $\bar{\bx}_i$. For each pair, let $\mathcal{J}(i)$  be the index of the pair that was matched with the $i$th pair such that $\mathcal{J}(\mathcal{J}(i)) = i$. Consider the variance estimator \begin{align}\label{eq:pop}
\text{se}(\bar{L}_\Gamma; \bQ_{PoP})^2 &= \frac{1}{2n^2}\sum_{i=1}^n(L_{\Gamma i} - L_{\Gamma \mathcal{J}(i)})^2,
\end{align}which simply squares the differences in the terms $L_{\Gamma i}$ within each pair of pairs.\begin{proposition}\begin{align*}E\{\text{se}(\bar{L}_\Gamma; \bQ_{PoP})^2\mid \cF, \cZ\}- \var(\bar{L}_\Gamma\mid \cF, \cZ) &= \frac{1}{2n^2}\sum_{i=1}^n(\mu_{\Gamma i} - \mu_{\Gamma \mathcal{J}(i)})^2. \end{align*}\end{proposition}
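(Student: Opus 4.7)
The plan is to exploit two structural facts: (i) conditional on $\cF$ and $\cZ$, the only remaining randomness is the within-pair encouragement flip $Z_{i1}-Z_{i2}$, which is independent across the $n$ pairs, so $\{L_{\Gamma i}\}_{i=1}^n$ are mutually conditionally independent; and (ii) the pairing $\mathcal{J}$ is determined entirely by the covariates through the nonbipartite match performed prior to observing the assignment, so $\mathcal{J}$ is a nonrandom involution on $\{1,\dots,n\}$ with $\mathcal{J}(i)\neq i$. These two facts drive the whole calculation.

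First I would expand the squared difference inside $\text{se}(\bar{L}_\Gamma;\bQ_{PoP})^2$ and take conditional expectation pair by pair. For each $i$, using $\mathcal{J}(i)\neq i$ and conditional independence across pairs,
\begin{align*}
E\bigl[(L_{\Gamma i}-L_{\Gamma \mathcal{J}(i)})^2\mid \cF,\cZ\bigr]
&= \var(L_{\Gamma i}\mid \cF,\cZ)+\var(L_{\Gamma \mathcal{J}(i)}\mid \cF,\cZ)+(\mu_{\Gamma i}-\mu_{\Gamma \mathcal{J}(i)})^2,
\end{align*}
where I have used the bias--variance decomposition $E[X^2]=\var(X)+E[X]^2$ on each of the two squared terms and the factorization $E[L_{\Gamma i}L_{\Gamma \mathcal{J}(i)}\mid \cF,\cZ]=\mu_{\Gamma i}\mu_{\Gamma \mathcal{J}(i)}$ on the cross term.

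Next I would sum over $i=1,\dots,n$, invoking the involution property of $\mathcal{J}$: as $i$ ranges over $\{1,\dots,n\}$, so does $\mathcal{J}(i)$, hence $\sum_{i=1}^n \var(L_{\Gamma \mathcal{J}(i)}\mid \cF,\cZ)=\sum_{i=1}^n \var(L_{\Gamma i}\mid \cF,\cZ)$. This gives
\begin{align*}
E\bigl[\text{se}(\bar{L}_\Gamma;\bQ_{PoP})^2\mid \cF,\cZ\bigr]
&= \frac{1}{n^2}\sum_{i=1}^n \var(L_{\Gamma i}\mid \cF,\cZ)+\frac{1}{2n^2}\sum_{i=1}^n(\mu_{\Gamma i}-\mu_{\Gamma \mathcal{J}(i)})^2.
\end{align*}
Finally, again by conditional independence across pairs, $\var(\bar{L}_\Gamma\mid \cF,\cZ)=n^{-2}\sum_{i=1}^n \var(L_{\Gamma i}\mid \cF,\cZ)$, and subtracting yields the stated identity.

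There is really no deep obstacle here; the result is an unbiasedness-style computation. The only points that need care are articulating cleanly why $\{L_{\Gamma i}\}$ are conditionally independent (the assignment $Z_{ij}$ is the sole source of randomness in the finite-population framework and is independent across matched pairs under the sensitivity model conditional on $\cZ$), and emphasizing that $\mathcal{J}$ is pre-specified from $\bar{\bx}_i$ so it can be treated as fixed when computing the conditional expectation. Once those are noted, the expansion and the involution identity $\sum_i\var(L_{\Gamma \mathcal{J}(i)})=\sum_i\var(L_{\Gamma i})$ deliver the result immediately.
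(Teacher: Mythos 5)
Your proof is correct. You take the elementary route that the paper only alludes to in passing (``a proof unique to this estimator follows by expanding the square'') but does not write out: you expand $(L_{\Gamma i}-L_{\Gamma \mathcal{J}(i)})^2$ directly, apply the bias--variance decomposition termwise, use conditional independence of $L_{\Gamma i}$ across pairs to factor the cross term, and invoke the involution property of $\mathcal{J}$ to collapse the variance sums. The paper's appendix proof instead realizes $\text{se}(\bar{L}_\Gamma;\bQ_{PoP})^2$ as an instance of the general projection-based estimator $\frac{1}{n^2}\tilde{\mathbf{L}}_\Gamma^T(\bI-\bH)\tilde{\mathbf{L}}_\Gamma$: it writes out the block-diagonal hat matrix $\bH_{\bQ_{PoP}}$ with $h_{ii}=1/2$, observes that the $i$th residual is $(L_{\Gamma i}-L_{\Gamma\mathcal{J}(i)})/\sqrt{2}$, and then reads off the bias from the general quadratic-form identity of Proposition 3. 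Your argument is more self-contained and avoids the $\tilde{L}_{\Gamma i}=L_{\Gamma i}/\sqrt{1-h_{ii}}$ rescaling entirely; what the paper's route buys is the explicit placement of the pairs-of-pairs estimator inside the unified family $\text{se}(\bar{L}_\Gamma;\bQ)$, which is then reused in the asymptotic validity arguments for $\varphi_{\bQ}^{(\lambda_0)}$. The two points you flag as needing care --- that the sole source of randomness is the pair-wise assignment, independent across pairs conditional on $\cF$ and $\cZ$, and that $\mathcal{J}$ is a fixed involution because it is built from the within-pair covariate averages $\bar{\bx}_i$, which do not vary over $\bz\in\Omega$ --- are exactly the right ones, and the paper makes the same observation about $\bQ_{PoP}$ being constant over the randomization distribution.
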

\vspace{-.1 in}While a proof unique to this estimator follows by expanding the square, as we show in the web-based supplement the result can also be obtained through Proposition \ref{prop:var1} by letting $\bQ_{PoP}$ be the $n \times (n/2)$ matrix with the $q$th column containing membership indicators for the $q$th of $n/2$ pairs of pairs. See the supplementary material for how to use an odd number of pairs and for a comparison of the regression and pairs of pairs approaches.

\section{Improved sensitivity analysis }
\subsection{An improved sensitivity analysis for sample average effects}
\label{sec:new}

The standard errors in \S \ref{sec:var} can be used to improve the power of the sensitivity analysis while maintaining the asymptotic level of the test. Consider conducting a sensitivity analysis using $\bar{L}_\Gamma/\text{se}(\bar{L}_\Gamma; \bQ)$ as a test statistic for some matrix $\bQ$.  Let $G_{\Gamma}(\cdot; \bQ)$ be the distribution of $\bar{B}_\Gamma/\text{se}(\bar{B}_\Gamma; \bQ)$ given $|\bm{\zeta}^{(\lambda_0)}|$, where $\bar{B}_\Gamma$ is defined as in \S \ref{sec:senshet}. The modified test is
\begin{align*}
\varphi_\bQ^{(\lambda_0)}(\alpha,\Gamma ) &= \1\{\bar{L}_\Gamma/\text{se}(\bar{L}_\Gamma; \bQ) \geq G_\Gamma^{-1}(1-\alpha; \bQ)\}.
\end{align*}
\begin{proposition}\label{prop:improveinf}
Suppose (\ref{eqn:sensi}) holds at $\Gamma$ and that $H_{weak}^{(\lambda_0)}$ holds. Under regularity conditions\;\;${\lim}_{n\rightarrow\infty}\;E\{\varphi_\bQ^{(\lambda_0)}(\alpha,\Gamma )\} \leq \alpha$, such that the procedure yields a valid level$-\alpha$ sensitivity analysis for the effect ratio with heterogeneous effects. Furthermore, for $\bQ = \bQ_{reg}$ or $\bQ = \bQ_{PoP}$ and regardless of whether or not the null holds, then under regularity conditions 
$\lim_{n\rightarrow\infty}\;E\{\varphi_\bQ^{(\lambda_0)}(\alpha,\Gamma )\} \geq {\lim}_{n\rightarrow\infty}\; E\{\varphi^{(\lambda_0)}(\alpha,\Gamma )\},$
such that the new procedure is both less conservative under the null and more powerful under the alternative than the procedure employing the conventional standard error estimator. 
\end{proposition}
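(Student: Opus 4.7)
\emph{Plan.} The proof proceeds in two stages: (i) asymptotic validity of $\varphi_{\bQ}^{(\lambda_0)}(\alpha,\Gamma)$ as a level-$\alpha$ test of $H_{weak}^{(\lambda_0)}$, and (ii) the comparative power dominance over $\varphi^{(\lambda_0)}(\alpha,\Gamma)$ when $\bQ\in\{\bQ_{reg},\bQ_{PoP}\}$. Stage (i) retraces the proof of Proposition \ref{prop:stu}, substituting $\se(\bar{L}_\Gamma;\bQ)$ for $\se(\bar{L}_\Gamma)$ throughout and invoking Proposition \ref{prop:var1} in place of Lemma 3 of \citet{Fogarty:2019}. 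Stage (ii) combines Propositions \ref{prop:var1}--\ref{prop:var2} (for $\bQ_{reg}$) and the pair-of-pairs analysis of \S\ref{sec:limit} (for $\bQ_{PoP}$) with the observation that $G_\Gamma^{-1}(1-\alpha;\bQ)$ has the \emph{same} standard-normal limit for every admissible $\bQ$.

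For stage (i), the expectation bound $E(\bar{L}_\Gamma\mid\cF,\cZ)\leq 0$ under $H_{weak}^{(\lambda_0)}$ and (\ref{eqn:sensi}) is unchanged from the proof of Proposition \ref{prop:stu}, while Proposition \ref{prop:var1} supplies $E\{\se(\bar{L}_\Gamma;\bQ)^2\mid\cF,\cZ\}\geq\var(\bar{L}_\Gamma\mid\cF,\cZ)$. Under the regularity conditions, $n\cdot\se(\bar{L}_\Gamma;\bQ)^2$ converges in probability to a deterministic limit at least as large as $\lim n\cdot\var(\bar{L}_\Gamma\mid\cF,\cZ)$, and combining a central limit theorem for $\sqrt{n}\bar{L}_\Gamma$ with Slutsky shows $\bar{L}_\Gamma/\se(\bar{L}_\Gamma;\bQ)$ is asymptotically stochastically dominated by $N(0,1)$. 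To pin down the critical value, I would observe that under the reference distribution $E(B_{\Gamma i}\mid|\boldsymbol{\zeta}^{(\lambda_0)}|)=0$, so the excess-bias term in Proposition \ref{prop:var1} vanishes and $\se(\bar{B}_\Gamma;\bQ)^2$ is exactly unbiased for $\var(\bar{B}_\Gamma\mid|\boldsymbol{\zeta}^{(\lambda_0)}|)$; a Lindeberg CLT on the independent $V_{\Gamma i}|\zeta^{(\lambda_0)}_i|$ and a WLLN on the quadratic form then deliver $\bar{B}_\Gamma/\se(\bar{B}_\Gamma;\bQ)\Rightarrow N(0,1)$, so that $G_\Gamma^{-1}(1-\alpha;\bQ)\to\Phi^{-1}(1-\alpha)$.

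For stage (ii), write $\bar{L}_\Gamma/\se(\bar{L}_\Gamma;\bQ)=\{\bar{L}_\Gamma/\se(\bar{L}_\Gamma;\mathbf{1}_n)\}\cdot\{\se(\bar{L}_\Gamma;\mathbf{1}_n)/\se(\bar{L}_\Gamma;\bQ)\}$. For $\bQ_{reg}$, Proposition \ref{prop:var2} implies that $\se(\bar{L}_\Gamma;\mathbf{1}_n)/\se(\bar{L}_\Gamma;\bQ_{reg})$ tends in probability to a limit at least one; for $\bQ_{PoP}$, the analysis of \S\ref{sec:limit}, formalized under the matching-distance and Lipschitz conditions of \citet{Abadie:2008}, shows that the explainable component $n^{-1}\sum\{\eta(\bx_i)-\eta(\bx_{\mathcal{J}(i)})\}^2\to 0$, so the limiting excess bias of $\se(\bar{L}_\Gamma;\bQ_{PoP})^2$ collapses to the idiosyncratic floor $n^{-1}\lim\sum\varepsilon^2(\bx_i)$, again yielding a ratio $\geq 1$ in the limit. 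Because $\Phi^{-1}(1-\alpha)>0$ for $\alpha<0.5$ and both critical values $G_\Gamma^{-1}(1-\alpha;\bQ)$ and $G_\Gamma^{-1}(1-\alpha;\mathbf{1}_n)$ share the same limit, Slutsky together with the monotonicity $P(T^*\geq c')\geq P(T^*\geq c)$ whenever $0<c'\leq c$ yields the asymptotic power inequality.

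The main technical obstacle is promoting Proposition \ref{prop:var1}'s first-moment statement into the required probability limits for $n\cdot\se(\bar{L}_\Gamma;\bQ)^2$, which demands controlling the variance of the quadratic form $\bm{\tilde{L}_\Gamma}^T(\bI-\bH)\bm{\tilde{L}_\Gamma}/n^2$ under the pair-randomization distribution. For $\bQ_{reg}$ this is delivered by the regularity conditions accompanying Proposition \ref{prop:var2} (bounded fourth moments of $L_{\Gamma i}$ and a stable limiting matrix $n^{-1}\bQ_{reg}^T\bQ_{reg}$); for $\bQ_{PoP}$ it reduces to the moment and matching-distance hypotheses of \citet{Abadie:2008}, together with $n^{-1}\sum\|\bx_i-\bx_{\mathcal{J}(i)}\|^2\to 0$. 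Odd $n$ is handled via the triple-of-pairs device of \S\ref{sec:pop}, which contributes only $O(1/n)$ corrections.
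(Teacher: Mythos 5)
Your proposal is correct and its overall architecture matches the paper's: stage (i) reduces to showing $E(\bar{L}_\Gamma\mid\cF,\cZ)\leq 0$, that $\se(\bar{L}_\Gamma;\bQ)^2$ is conservative in expectation via Proposition \ref{prop:var1}, and that the reference distribution's quantile converges to $\Phi^{-1}(1-\alpha)$ because $E(B_{\Gamma i}\mid|\boldsymbol\zeta^{(\lambda_0)}|)=0$ kills the bias term in $\se(\bar{B}_\Gamma;\bQ)^2$ for \emph{any} $\bQ$ --- you correctly identified that this last observation is the only genuinely new ingredient beyond Proposition \ref{prop:stu}. Two points of divergence are worth noting. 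First, for $\bQ_{PoP}$ the number of columns grows as $n/2$, so the consistency machinery behind Proposition \ref{prop:var2} (which requires $h_{ii}\to 0$ and a fixed column dimension) does not apply; the paper therefore proves only the one-sided concentration statement $\P\{n\,\se(\bar{L}_\Gamma;\bQ_{PoP})^2\leq n\,\var(\bar{L}_\Gamma)-\epsilon\}\to 0$ via Chebyshev, which is all that validity requires. Your claim that $n\,\se(\bar{L}_\Gamma;\bQ_{PoP})^2$ converges in probability to a deterministic limit is stronger than needed and would require assuming the bias term $(2n)^{-1}\sum_i(\mu_{\Gamma i}-\mu_{\Gamma\mathcal{J}(i)})^2$ itself converges; you should weaken this to the one-sided bound. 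Second, for the power comparison with $\bQ_{PoP}$ the paper states a purely finite-population sufficient condition, $\lim_n (2n)^{-1}\sum_i(\mu_{\Gamma i}-\bar\mu_\Gamma)(\mu_{\Gamma\mathcal{J}(i)}-\bar\mu_\Gamma)\geq 0$ (pairing the pairs does no worse than random pairing at aligning the expectations), whereas you route through the superpopulation decomposition $\mu_{\Gamma i}=\eta(\bx_i)+\varepsilon(\bx_i)$ with the Lipschitz and matching-distance hypotheses of \citet{Abadie:2008}. Your route is sound and yields the same conclusion, but it imposes a generative model on quantities the proposition treats as fixed given $\cF$; the paper's condition is weaker and is the one it actually assumes, relegating the Lipschitz connection to a closing remark. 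Everything else --- the product decomposition of the studentized statistic, the shared limiting critical value, and the monotonicity step --- tracks the paper's argument.
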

\vspace{-.1 in}
The magnitude of the improvement will depend upon the extent to which $\text{se}(\bar{L}_\Gamma;\bQ)$ improves upon the conventional standard error estimator, $\text{se}(\bar{L}_\Gamma)$. In the web-based supplement, we present a detailed simulation study illustrating the potential benefits conferred by these improved standard error estimators when heterogeneous effects are present. The simulations include scenarios both with no hidden bias and where hidden bias is present. The web-based supplement also includes a sketch of the proof of Proposition \ref{prop:improveinf}, a discussion of sufficient conditions, and an algorithm illustrating the procedure $\varphi_\bQ^{(\lambda_0)}(\alpha,\Gamma )$.

\subsection{Application to the EGS study}

Table~\ref{tab.sens2} contains a comparison of 95\% confidence intervals calculated at $\Gamma=1$ and sensitivity values within the septic and non-septic subgroups for the hospital length of stay outcome using three standard errors: the conventional standard error for a paired design, the regression-based standard error, and one formed by pairing together pairs on the basis of observed covariates. In our application, the different standard error estimators did not materially alter the confidence interval lengths. We found that the $\Gamma$ changepoints were also similar across standard errors. Results for the complication and mortality outcomes indicated a similar trend and are omitted here. 

\begin{table}[ht]
\centering
\caption{Confidence intervals and sensitivity values for hospital length of stay with different standard errors. }
\label{tab.sens2}
\begin{tabular}{llclc}
  \toprule
& \multicolumn{2}{c}{Septic Patients}&\multicolumn{2}{c}{Non-septic Patients}  \\
\cmidrule(lr){2-3}\cmidrule(lr){4-5}
Standard Error & 95\% Conf. Int.  & Sens. Value & 95\% Conf. Int.  & Sens. Value\\
\midrule
Conventional & [ 3.87 , 8.71 ] & 1.64& [ 3.36 , 4.81 ] & 2.36\\ 
Regression-based  & [ 4.03 , 8.69 ]&  1.62& [ 3.36 , 4.81 ]& 2.38 \\ 
Pair of Pairs  & [ 4.01 , 8.66 ] & 1.64& [ 3.37 , 4.82 ]& 2.37\\ 
\bottomrule
\end{tabular}
\end{table}This lack of improvement suggests that there is little evidence for substantive effect modification within the sepsis and non-sepsis subgroups on the basis of other patient characteristics. In the web-based supplement, we formally derive an exact omnibus test for effect modification which leverages our improved standard errors. Using this test, for all outcome variables and within both the septic and non-septic subgroup we fail to reject the null hypothesis that the proportional dose model in \S \ref{sec:propdose} holds, finding little evidence that the IV estimates vary as a function
of over 90 additional observed covariates.
\section{Heterogeneity and the power of a sensitivity analysis in IV studies}
\subsection{Larger but brittler effect estimates}
While there was no evidence for effect modification within the septic and non-septic subgroups on the basis of observed covariates, the effect ratio estimates within the septic group were larger than those in the non-septic subgroup for all three outcomes as presented in Table \ref{tab.out}. For the complication outcome, we were able to reject the null of no effect ratio until $\Gamma=1.56$ and $\Gamma=1.59$ for the septic and non-septic groups respectively. For the length of stay variable, the discrepancy is even more substantial: for two individuals in the same matched set, a hidden variable would need to produce discrepancies in the odds that the individuals were assigned to a high TTO surgeon by a factor of 1.64 in the septic group, while in the septic-group the odds would need to differ by a factor of 2.36. Despite the septic subgroup having the larger estimated effects, the estimates for the \textit{non-septic} group proved to be more robust to hidden bias. We now demonstrate how this phenomenon is reflective of the importance of reduced within-pair homogeneity for reducing sensitivity to hidden bias.
\subsection{A favorable reality unknown to the practitioner}
\label{sec:favorable}
Imagine that $Z_{ij}$ were actually a valid instrument and that the true value for the effect ratio exceeded its hypothesized value. Given that our instrument is not randomly assigned, a critic could always counter that a large effect estimate is merely due to bias from lurking variables invalidating the proposed instrument. Even in this favorable situation of a valid instrument and a positive treatment effect, we would hope that our inferences would prove robust to moderate degrees of hidden bias to protect ourselves against such criticism. Calculations within this section proceed under this favorable setting: there is no hidden bias, there is truly an effect, but the practitioner, blind to this reality, hopes that her inferences perform well under the stress of a sensitivity analysis. We thus assess our method's ability to discriminate between (1) no treatment effect and hidden bias; and (2) a treatment effect without hidden bias.

Until this point inference has been performed conditional upon $\cF$, and a generative model for $\cF$ has been neither required nor assumed. For the calculations in this section, it is convenient to assume a superpopulation model. Because $\cF$ will be viewed as random, the procedures exploiting effect modification presented in \S \ref{sec:new} no longer provide valid tests for the effect ratio \citep[\S 5]{Fogarty:2018}. We focus on the sensitivity analysis given in (\ref{eq:procedure}) using the conventional standard error estimate, $\varphi^{(\lambda_0)}$, which remains valid under a superpopulation model. Following \citet{Small:2008}, we imagine that $\zeta_i^{(\lambda_0)}$ is generated as \begin{align}\label{eq:gen2}
\zeta_i^{(\lambda_0)} &= \epsilon_i + S_i(\lambda - \lambda_0),
\end{align} 
where $\epsilon_i = (Z_{i1}-Z_{i2})\{(Y_{i1}-Y_{i2}) - \lambda(D_{i1} - D_{i2})\}$ are the adjusted encouraged-minus-non encouraged differences in responses, and $S_i = (Z_{i1}-Z_{i2})(D_{i1}-D_{i2})$ are the encouraged-minus-non encouraged differences in the treatment received, reflecting the strength of the instrument. Note that this generative model does \textit{not} imply the proportional dose model, such that the individual-level effects are allowed to be heterogeneous.

We assume that $\epsilon_i$ are $iid$ from a symmetric distribution with mean zero and finite variance $\sigma^2$. The treatments received are assumed binary. We assume that there are no defiers, that the exclusion restriction holds, and that individuals $ij$ are assigned status as compliers, never-takers and always-takers independently with probability $p_C$, $p_N$, and $p_A$ respectively. This results in $\P(S_i=1) = p_C + p_Ap_N$, $\P(S_i=-1) = p_Ap_N$ and $\P(S_i=0) = 1-p_C-2p_Ap_N$. The true treatment effect among compliers is $\lambda$, while $\lambda_0$ is its value under the null.

\subsection{Making sense of our sensitivity analysis}\label{sec:powersim}

Through simulation, we now compare the role that the effect size relative to individual-level variability plays in an observational study. We use the length of stay outcome to motivate parameter values, assuming that the estimated effect ratios are actually the true values of $\lambda$ and that the standard deviations of $\zeta_i^{(\lambda_S)}$ and $\zeta_i^{(\lambda_{NS})}$ in our sample reflect the true standard deviations for these distribution. We assume there are no defiers, and set the probability of compliance to $p_C = 0.58$ (the estimated compliance rate in our data set), and set $p_A=p_N = 0.21$. We simulate from the generative model (\ref{eq:gen2}) using different sample sizes for the septic and non-septic groups. The ratio of the variances of $\epsilon_i$ in the septic and non-septic subgroups is $\sigma^2_S/\sigma^2_{NS}\approx 8$, and we set $n_S/n_{NS}$, the ratio of the sample sizes for the septic and non-septic groups, to also equal 8. We do this so that the variances of the sample averages of the $\epsilon_i$ terms in both groups, $\sigma^2_S/n_S$ and $\sigma^2_{NS}/n_{NS}$, are approximately equal. The equality of these variances for the sample mean may appear to put inference within the non-septic subgroup at a disadvantage: as $\lambda_S > \lambda_{NS}$, inference at $\Gamma=1$ for the null of no effect ratio would be more powerful in the septic group. Does this disadvantage carry over to a sensitivity analysis at $\Gamma > 1$?

In each simulation, $n_S$ and $n_{NS}$ observations are drawn from the distribution (\ref{eq:gen2}), with parameters $(\lambda_{S}, \sigma^2_S)$ and $(\lambda_{NS}, \sigma^2_{NS})$ and with $\epsilon_i$ normally distributed. After drawing the samples, we conduct a sensitivity analysis at $\Gamma$, and record whether or not we correctly reject the null of zero effect ratio for each subgroup. We proceed with $n_S=200, 1000, 2000$, $10,000$ and $n_{NS} = n_S/8$ over a range of $\Gamma$ values. We run 10,000 simulations for each setting. Through this simulation, we highlight the different roles that sample size plays for inference assuming no hidden bias ($\Gamma=1$) and in a sensitivity analysis. \begin{figure}
\begin{center}
\includegraphics[scale=.45]{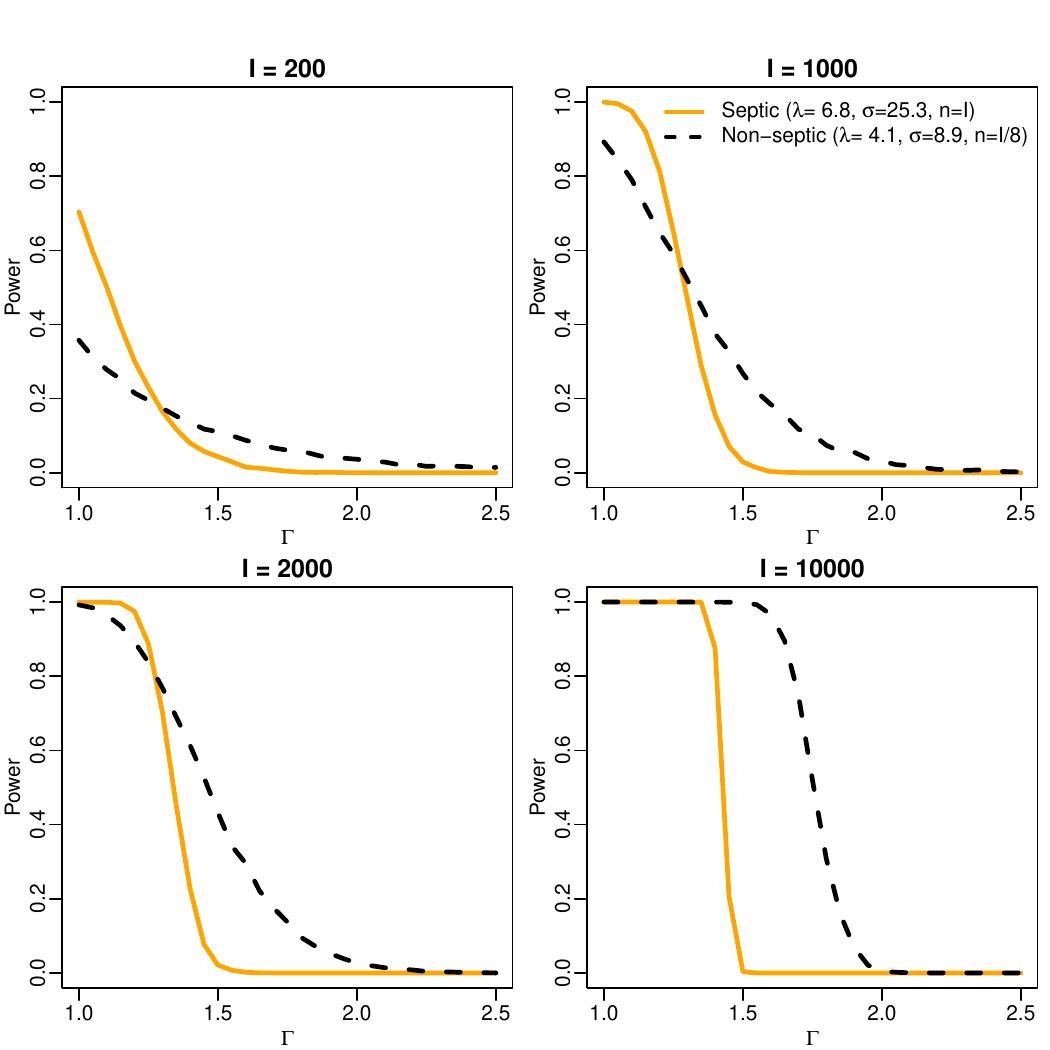}
\caption{The four plots show the power of the test against the null $\lambda = 0$ in the absence of unmeasured confounding as a function of $\Gamma$ for $I = 200, 1000, 2000$, and $10000$. The septic subgroup has $n_S = I$ pairs with a larger effect ratio and larger degree of heterogeneity, while the non-septic has $n_{NS} = I/8$ pairs, a smaller effect ratio and a smaller degree of heterogeneity. The probability of compliance is 0.58 in all simulations.  }\label{fig:power}
\end{center}
\end{figure}
Figure \ref{fig:power} presents the results of the simulation study. Each plot shows the performance of the sensitivity analyses in the septic and non-septic subgroups, varying $n_S$ while maintaining $n_S/n_{NS} = 8$. For instance, the upper-left panel shows the performance in a sensitivity analysis as a function of $\Gamma$ with $n_S = 200$, and $n_{NS} = 25$. Assuming no hidden bias ($\Gamma=1$), the test in the septic subgroup has higher power as $\lambda_S > \lambda_{NS}$ and $\sigma^2_{NS}/n_{NS} \approx \sigma^2_{S}/n_S$. As $\Gamma$ increases, we see that despite the smaller effect size the test in the non-septic subgroup begins to outperform that in the septic group, owing to the fact that $\sigma^2_{NS} < \sigma^2_S$. As $n_S$ increases with $n_S/n_{NS}$ fixed, we see that the larger sample size in the septic-group provides a smaller and smaller benefit. In sufficiently large samples, at $\Gamma=1$ the tests in both subgroups reject the null of no effect ratio with near certainty. As $\Gamma$ increases in this larger sample regime, the power functions of the tests in both subgroups converge to step functions, with changepoints at $\Gamma=1.91$ for the non-septic group and $\Gamma=1.47$ for the septic group. These values are the \textit{design sensitivities} \citep{Rosenbaum:2004b} for these two generative models: asymptotically a test of no effect will reject with probability zero when $\Gamma$ is above the design sensitivity and one below it; see the web-based supporting material for a formula for the design sensitivities along with a numerical evaluation of the impact of compliance on design sensitivity in IV designs. Overall, the comparative performance between the septic and non-septic subgroups highlights the importance of reduced heterogeneity of responses within pairs for increasing robustness \citep{Rosenbaum:2005b}.

\section{Summary}

Assuming the validity of surgeon's tendency to operate as an instrument, our effect estimates indicate that surgery has an adverse, statistically significant, effect on both hospital length of stay and presence of a complication within both the septic and non-septic subgroups, with a larger estimated effect within the septic group. While the effect estimates for 30-day mortality indicate that surgery may also have an adverse effect, these results were not statistically significant at $\alpha=0.05$ even assuming no hidden bias ($\Gamma=1$). The complication outcome was robust to a moderate degree of hidden bias: for both septic and non-septic patients, two individuals with the same observed covariates would have to differ in their odds of being assigned to a high versus low TTO surgeon by a factor of roughly 1.6 to overturn the findings of the study. For septic patients a similar odds ratio discrepancy would be required to overturn the finding of an adverse effect on hospital length of stay, while for non-septic patients the odds would have to differ by nearly 2.4. This helps to frame the debate about what arguments against the validity of TTO as an instrument would actually matter: if a pattern of hidden bias could not influence the odds of assignment of surgeons to patients to this extent, it could not explain away the finding of an effect.

These conclusions required innovations in sensitivity analyses for IV designs. Given surgery's suspected variation in efficacy due to baseline patient characteristics we developed a sensitivity analysis that does not require effect homogeneity. Instead, we show how a sensitivity analysis for the effect ratio may proceed without specifying a pattern of effect heterogeneity. We then developed variance estimators that can result in improved sensitivity analyses when effects vary with observed covariates. Motivated by the finding that treatment effect estimates in the non-septic group were more robust to hidden bias despite being of a smaller estimated magnitude, we showcased the role of strong instruments and reduced within-pair effect variation for design sensitivity and the power of a sensitivity analysis.

\newpage
\appendix
\section{Review}
\setcounter{table}{4}

\begin{table}[h]
	\centering
	\begin{tabular}{ll}
		\toprule
		Notation & Description \\
		\midrule
		$Z_{ij}$ & Encouragement to treatment (1 yes, 0 no)\\
		$d_{ij}(1)$ \& $d_{ij}(0)$ & Potential exposures for individual $ij$, $i=1, \ldots n$; $j=1, 2$\\
		$y_{ij}(1)$ \& $y_{ij}(0)$ & Potential outcomes\\
		$Y_{ij}$ \& $D_{ij}$ & Observed outcome and exposure \\
		$\lambda$ & Effect ratio: \\
		& $\frac{\sum_{i=1}^{n} \sum_{j=1}^{2} \{ y_{ij}(1) - y_{ij}(0)\}}{\sum_{i=1}^{n} \sum_{j=1}^{2} \{ d_{ij}(1) - d_{ij}(0)\}} $\\
		$H_{prop}^{(\lambda_0)}$ & Null hypothesis under the proportional dose model \\
		& $y_{ij}(1) - y_{ij}(0) = \lambda_0 \{d_{ij}(1) - d_{ij}(0)\}\;\;\forall \;\;ij$ \\
		$H_{weak}^{(\lambda_0)}$ & Neyman's weak null hypothesis on the effect ratio\\ &$\lambda = \lambda_0$ \\
		\midrule
		\midrule
		\multicolumn{2}{l}{Definition}\\
		\midrule
		\multicolumn{2}{l}{$\zeta_{i}^{(\lambda_0)} = (Z_{i1} - Z_{i2})\{Y_{i1} - Y_{i2} - \lambda_0 (D_{i1} - D_{i2})\}$} \\[0.2cm]
		\multicolumn{2}{l}{$L_{\Gamma i} = \zeta_i^{(\lambda_0)} - \left(\frac{\Gamma -1}{\Gamma+1} \right) |\zeta_i^{(\lambda_0)}|$}\\[0.2cm]
		\multicolumn{2}{l}{$\mu_{\Gamma i} = E(L_{\Gamma i}\mid \cF, \cZ)$}\\[0.2cm]
		\multicolumn{2}{l}{$\se(\bar{L}_{\Gamma})^2 = \frac{1}{n(n-1)} \sum_{i=1}^{n} (L_{\Gamma i} - \bar{L}_{\Gamma})^2$}\\[0.2cm]
		
		\multicolumn{2}{l}{$\theta_\Gamma = \Gamma/(1+\Gamma)$}\\[0.2cm]
		\multicolumn{2}{l}{$B_{\Gamma i} = V_{\Gamma i} |\zeta_i^{(\lambda_0)}| - \left(\frac{\Gamma -1}{\Gamma+1} \right) |\zeta_i^{(\lambda_0)}|$}\\[0.2cm]
		\multicolumn{2}{l}{where $V_{\Gamma i} = \pm 1$; $\pr(V_{\Gamma i}= 1)= \theta_{\Gamma}$ and $\pr(V_{\Gamma i} = -1) = 1-\theta_{\Gamma}$} \\ [0.2cm]
		\multicolumn{2}{l}{$\varphi^{(\lambda_0)}(\alpha, \Gamma) = 1\{\bar{L}_{\Gamma}/\se(\bar{L}_{\Gamma}) \geq G_{\Gamma}^{-1}(1-\alpha)\}$}\\[0.2cm]
		\multicolumn{2}{l}{$\varphi_{\bQ}^{(\lambda_0)}(\alpha, \Gamma) = 1\{\bar{L}_{\Gamma}/\se(\bar{L}_{\Gamma}; \bQ) \geq G_{\Gamma}^{-1}(1-\alpha; \bQ)\}$}\\
		\multicolumn{2}{l}{$G_{\Gamma}(\cdot\; \bQ)$: Studentized reference distribution using matrix $\bQ$}\\
		\bottomrule
	\end{tabular}
\end{table}

\newpage
\subsection{Preliminaries}\label{sec:A1}

For each $\lambda_0$, we define a new variable $y_{ij}^*(z) = y_{ij}(z) - \lambda_0 d_{ij}(z) (z=0,1)$ which can be thought as the potential outcome adjusted for the dosage/exposure level received for each individual $ij$. The adjusted potential outcomes $y_{ij}^*(1), y_{ij}^*(0)$ are treated as if they are the potential outcomes usually defined in the literature. Define $\tau_{ij} = y_{ij}^*(1) - y_{ij}^*(0)$ and $\bar{\tau}_{i} = (\tau_{i1} + \tau_{i2})/2$. Then, $\tau_{ij}$ can be considered as the effect of the treatment on the adjusted outcome for individual $ij$, and $\bar{\tau_i}$ can be considered as the average treatment effect within pair $i$. Recall that $\zeta_i^{(\lambda_0)} = (Z_{i1} - Z_{i2})\{Y_{i1} - Y_{i2} - \lambda_0 (D_{i1} -D_{i2})\}$, which can be represented by $(Z_{i1} - Z_{i2})(Y_{i1}^* - Y_{i2}^*)$ where  $Y_{ij}^* = y_{ij}^*(1) Z_{ij} + y_{ij}^*(0) (1-Z_{ij})$ is the observed ``adjusted'' outcome. This quantity can be understood as the treated-minus-control paired ``adjusted'' outcome difference. Thus, $\zeta_i^{(\lambda_0)}$ would be an unbiased estimator for $\bar{\tau}_i$ in a randomized encouragement design, but may exhibit bias in the presence of unmeasured confounders. Further define
$$
\eta_{i} = \frac{y_{i1}^*(1) + y_{i1}^*(0)}{2} - \frac{y_{i2}^*(1) + y_{i2}^*(0)}{2}, 
$$
which is the difference in the averages of the potential outcomes for individuals within a pair. Also, recall that our sensitivity analysis model assumes, for each $\Gamma$, 
$$
1 - \theta_\Gamma \leq \pi_{i} = \pr(Z_{i1}=1 \mid \mathcal{F}, \mathcal{Z}) \leq \theta_\Gamma\;\;\; i=1, \ldots, n
$$
where $\theta_\Gamma = \Gamma/(1+\Gamma)$.

Asymptotics in the forthcoming discussions have in mind a single population $\cF$ of increasing size; randomness stems only from the assignments $\bZ$. The statement that $H_{weak}^{(\lambda_0)}$ holds in the limit should, in reality, reflect the changing value of the effect ratio $\lambda_{0n}$ as the finite population $\mathcal{F}_n$ grows. We will suppress both this dependence and conditioning upon $\cF$ and $\cZ$ for readability.

\subsection{Regularity conditions}

\begin{condition}
	There exist constants $C>0$, $\mu_m$ and $\mu_a$ such that as $n \to \infty$, 
	\begin{align}
		n^{-1} \sum_{i=1}^{n} |\eta_i| > C, \quad n^{-1}\sum_{i=1}^{n} \eta_i^2 > C, &\\
		n^{-2} \sum_{i=1}^{n}\eta_i^2 \to 0, \quad n^{-2} \sum_{i=1}^{n}\eta_i^4 \to 0, \quad n^{-2} \sum_{i=1}^{n}\bar{\tau}_i^4 \to 0,& \\
		n^{-1} \sum_{i=1}^{n} (2\pi_i -1) \eta_i \to \mu_m, \quad n^{-1} \sum_{i=1}^{n} \pi_i |\bar{\tau}_i + \eta_i| + (1-\pi_i)|\bar{\tau}_i - \eta_i| \to \mu_a &
	\end{align}
\label{eqn:condi1}
\end{condition}

\begin{condition}
	There exists a constant $\nu^2 > 0$ such that 
	\begin{align}
		n^{-1} \sum_{i=1}^{n} \pi_i (\bar{\tau}_i + \eta_i)^2 + (1-\pi_i)(\bar{\tau}_i - \eta_i)^2 \to \nu^2.
	\end{align}
\label{eqn:condi2}
\end{condition}

\section{Proofs}

\subsection{Proposition 1}

Given Conditions \ref{eqn:condi1} and \ref{eqn:condi2}, Proposition 1 follows immediately from Theorems 1 and 2 of \citet{Fogarty:2019}. We present a sketch of the proof here, the sketch slightly diverging from the proof in \citet{Fogarty:2019} as here we merely establish asymptotic Type I error control under the weak null. 

The first component of the proof ignores the reference distribution $G_\Gamma (\cdot)$, instead focusing on the large-sample reference distribution $\Phi(\cdot)$, the CDF of the standard Normal. Under Conditions \ref{eqn:condi1} - \ref{eqn:condi2}:

\begin{enumerate}
\item[(i)] Under $H_{weak}^{(\lambda_0)}$, the random variable $\sqrt{n}\bar{L}_\Gamma$ converges in distribution to a Normal random variable, whose expectation and standard deviation are unknown due to their dependence on both the unknown potential outcomes and the unmeasured confounders.

\item[(ii)] Under $H_{weak}^{(\lambda_0)}$ and if the sensitivity model holds at $\Gamma$, $E(\sqrt{n}\bar{L}_\Gamma)\leq 0$, and $\var(\sqrt{n}\bar{L}_\Gamma) \leq nE\{\text{se}(\bar{L}_\Gamma)^2\}$, which tends to a limit as $n\rightarrow \infty$. That is, the unknown expectation is upper bounded by zero, and there exists a sample-based standard error whose expectation upper bounds the unknown variance.

\item[(iii)] The estimator $\text{se}(\bar{L}_\Gamma)^2$ converges in probability to its expectation, which is never smaller than $\text{var}(\bar{L}_\Gamma)$. For any $\epsilon >0$, we have that $\lim_{n\rightarrow\infty} \P(n\times \text{se}(\bar{L}_\Gamma)^2 + \epsilon \leq \var(\sqrt{n}\bar{L}_\Gamma)\} \rightarrow 0$ Fixing $\alpha \leq 0.5$ and sending $\epsilon$ to zero, 
\begin{align*}
\underset{n\rightarrow\infty}{\lim}\P\left\{\bar{L}_\Gamma \geq \text{se}(\bar{L}_\Gamma) \Phi^{-1}(1-\alpha)\right\} \leq \underset{n\rightarrow\infty}{\lim}\P\left\{\bar{L}_\Gamma \geq \text{sd}(\bar{L}_\Gamma) \Phi^{-1}(1-\alpha)\right\} \leq \alpha.
\end{align*}
\end{enumerate}

The second component of the proof involves showing that $\Phi^{-1}(1-\alpha)$ may be replaced by $G^{-1}_\Gamma(1-\alpha; \bm{1_n})$, the reference distribution generated by Algorithm 1 in the main text.  Showing this uses a technique from \citet{hoe52} for assessing the limiting behavior of a permutation distribution, along with the variant of Slutsky's theorem for randomization distributions from \citet{chu13}. To generate a reference distribution, we use $B_{\Gamma i}$ that stochastically dominates $L_{\Gamma i}$ under $H_{prop}^{(\lambda_0)}$, of the form 
\begin{align*}B_{\Gamma i} = V_{\Gamma i} |\zeta_i^{(\lambda_0)}| - \left(\frac{\Gamma -1}{\Gamma+1} \right) |\zeta_i^{(\lambda_0)}|,
\end{align*} 
where $V_{\Gamma i}=\pm 1$ are independent across pairs, with $\P(V_{\Gamma i} = 1) = \theta_\Gamma$. Let $(\sqrt{n}\bar{B}_\Gamma, \sqrt{n}\bar{B}'_\Gamma)$ be averages of $B_{\Gamma i}$ and $B'_{\Gamma i}$ formed with iid vectors $(\bm{V_\Gamma}$, $\bm{V'_\Gamma})$, such that $(\sqrt{n}\bar{B}_\Gamma, \sqrt{n}\bar{B}'_\Gamma)$ are identically distributed with covariance zero. We have
\begin{align*}E(B_{\Gamma i}) &= 0\\
\var(B_{\Gamma i}) &= E\{\var(B_{\Gamma i}\mid |\zeta_i^{(\lambda_0)}|\} + \var\{E(B_{\Gamma i}\mid |\zeta_i^{(\lambda_0)}|\}\\
&= 4\theta_\Gamma(1-\theta_\Gamma)\{\pi_i(\bar{\tau}_i + \eta_i)^2 + (1-\pi_i)(\bar{\tau}_i - \eta_i)^2\},
\end{align*} where $\theta_\Gamma, \pi_i, \eta_i$, and $\tau_i$ are defined as in Section \ref{sec:A1}. Further, under Conditions \ref{eqn:condi1} and \ref{eqn:condi2},  $(\sqrt{n}\bar{B}_\Gamma, \sqrt{n}\bar{B}'_\Gamma)^T$ tends in distribution to a multivariate normal, with mean vector zero, correlation zero, and equal variances $\nu_{\Gamma}^2 = 4 \theta_\Gamma (1-\theta_\Gamma) \nu^2$ where $\nu^2$ is defined in Condition~\ref{eqn:condi2}. We also have that $\sqrt{n}\text{se}(\bar{B}_\Gamma)$ converges in probability to $\nu_\Gamma$, such that $(\bar{B}_\Gamma/\text{se}(\bar{B}_\Gamma), \bar{B}'_\Gamma/\text{se}(\bar{B}'_\Gamma))^T$ converge in distribution to iid standard Normals. Recall that $G_\Gamma(\cdot; \bm{1}_n)$ is the CDF of $\bar{B}_\Gamma/\text{se}(\bar{B}_\Gamma)$. Combining the result of \citet{hoe52}, also given in Theorem 15.2.2 of \citet{leh05}, with Slutsky's theorem for randomization distributions \citep{chu13} yields that $\Phi^{-1}(1-\alpha)$ may be replaced by $G^{-1}_\Gamma(1-\alpha; \bm{1}_n)$ while preserving the asymptotic level of the procedure, proving the result.

\subsection{Proposition 2}
Let $\mathcal{D} = \{i: Y_{i1}+Y_{i2} = 1\}$, $\mathcal{C}_0 = \{i:Y_{i1}+Y_{i2}=0\}$, $\mathcal{C}_1 = \{i: Y_{i1}+Y_{i2} = 2\}$, and $\theta_\Gamma = \Gamma/(1+\Gamma)$. Under the sharp null, for any $\bz\in \Omega$, McNemar's test statistic can be written as $T_{\mathcal{D}}$ + $|\mathcal{C}_1|$, where $T_{\mathcal{D}} = T_{\mathcal{D}}(\bz, \bY) = \sum_{i\in \mathcal{D}}\sum_{j=1}^2z_{ij}Y_{ij}$. 

We now show that the studentized test statistic $\bar{L}_\Gamma/\text{se}(\bar{L}_\Gamma)$ is a monotone increasing function of $T_{\mathcal{D}}$ over $\bz \in \Omega$ under the sharp null. First observe that under the sharp null, for any $\bz \in \Omega$
\begin{align*}
n\bar{L}_\Gamma &= \sum_{i=1}^n\left\{(z_{i1}-z_{i2})(Y_{i1}-Y_{i2}) - (2\theta_\Gamma-1)|Y_{i1}-Y_{i2}|\right\}\\
&= \sum_{i \in \cD}(z_{i1}-z_{i2})(Y_{i1}-Y_{i2}) - (2\theta_\Gamma-1)|\cD|\\
&= \sum_{i \in \cD}\left\{z_{i1}Y_{i1} + z_{i2}Y_{i2} - z_{i1}(1-Y_{i1}) - z_{i2}(1-Y_{i2})\right\} - (2\theta_\Gamma-1)|\cD|\\
&= 2T_\cD + |\cD| - (2\theta_\Gamma-1)|\cD| = 2(T_\cD - \theta_\Gamma|\cD|),
\end{align*}

Meanwhile, $\sum_{i=1}^nL_{\Gamma i}^2$ can be expressed as
\begin{align*}
\sum_{i=1}^nL_{\Gamma i}^2 &= \{1+(2\theta_\Gamma-1)^2\}\sum_{i=1}^n(Y_{i1}-Y_{i2})^2 - 2(2\theta_\Gamma-1)\sum_{i=1}^n(Z_{i1}-Z_{i2})(Y_{i1}-Y_{i2})|Y_{i1}-Y_{i2}|\\
&= \{1+(2\theta_\Gamma-1)^2\}|\cD| - 2(2\theta_\Gamma-1)(2T_\cD + |\cD|)\\
&= \{1-(2\theta_\Gamma-1)^2\}|\cD| - 2(2\theta_\Gamma-1)\{2(T_\cD - \theta_\Gamma|\cD|)\},
\end{align*}
and $n(n-1)\text{se}(\bar{L}_\Gamma)^2$ as
\begin{align*}
n(n-1)\text{se}(\bar{L}_\Gamma)^2 = \{1-(2\theta_\Gamma-1)^2\}|\cD| - 2(2\theta_\Gamma-1)\{2(T_\cD - \theta_\Gamma|\cD|)\} - n^{-1}\{2(T_\cD - \theta_\Gamma|\cD|)\}^2.
\end{align*}
Let $a = n^{-1}$, $b=2(2\theta_\Gamma-1)$, $c=\{1-(2\theta_\Gamma-1)^2\}|\cD|$, and $x = 2(T_\cD - \theta_\Gamma|\cD|)$. We then have
\begin{align*}
\frac{\bar{L}_\Gamma}{\text{se}(\bar{L}_\Gamma)}\propto \frac{x}{\sqrt{c - bx - ax^2}},
\end{align*}
and to complete the proof it is sufficient to show that this function has a positive first derivative with respect to $x$ over its domain, $2\theta_\Gamma |\cD|\leq x \leq (2+ 2\theta_\Gamma)|\cD|$.

Differentiating yields
\begin{align*}
\frac{\partial}{\partial x}\frac{\bar{L}_\Gamma}{\text{se}(\bar{L}_\Gamma)} \propto \frac{\sqrt{c-bx - ax^2} + \frac{bx/2 + ax^2}{\sqrt{c-bx-ax^2}}}{c-bx-ax^2},
\end{align*}
and we want to show that this is always positive, which amounts to showing $c \geq (b/2)x$. Maximizing the right hand side over $x$, this means showing $\{1-(2\theta_\Gamma-1)^2\}|\cD| \geq (2\theta_\Gamma-1)(2|\cD| - 2\theta_\Gamma|\cD|)$, which does hold. $\bar{L}_\Gamma/\text{se}(\bar{L}_\Gamma)$ is monotone increasing in $2(T_\cD - \theta_\Gamma|\cD|)$, and hence also in $T_\cD + |\cC_1|$, which is McNemar's test. The test statistics have perfect rank correlation over $\bz \in \Omega$, and hence are equivalent.

\subsection{Proposition 3}

Recall that $\tilde{L}_{\Gamma i} = L_{\Gamma i}/\sqrt{1-h_{ii}}$, $\tilde{\mu}_{\Gamma i} = E(\tilde{L}_{\Gamma i})$ and $\tilde{\bm{\mu}}_{\Gamma} = E(\tilde{\mathbf{L}}_\Gamma)$. Define  $\bm\Lambda$ as the covariance matrix for $\tilde{L}_{\Gamma i} $, and note that $\bm{\Lambda}$ is a diagonal matrix, and the $i$-th diagonal element of $\bm\Lambda$ is $\Lambda_ii = \var(\tilde{L}_{\Gamma i}) = \var(L_{\Gamma i})/(1-h_{ii})$

Using results on expectations of quadratic forms, $E\{\text{se}(\bar{L}_\Gamma; \bQ)^2\}$ can be computed as
\begin{align*}
	E\{\text{se}(\bar{L}_\Gamma; \bQ)^2 \mid \mathcal{F}, \mathcal{Z}\} &= \frac{1}{n^2} E \left( \mathbf{\tilde{L}_{\Gamma}}^T (\mathbf{I} -\mathbf{H}) \mathbf{\tilde{L}_{\Gamma}} \right) \\
	&= \frac{1}{n^2} \left[ \tr\{(\bI - \bH) \bm{\Lambda}\} + \bm{\tilde{{\mu}}_\Gamma}^T (\bI - \bH) \bm{\tilde{{\mu}}_\Gamma}  \right] \\
	&= \frac{1}{n^2} \sum_{i=1}^{n} (1-h_{ii}) \Lambda_{ii} + \frac{1}{n^2}  \bm{\tilde{{\mu}}_\Gamma}^T (\bI - \bH) \bm{\tilde{{\mu}}_\Gamma} \\
	&= \frac{1}{n^2} \sum_{i=1}^{n}  \var({L}_{\Gamma i} \mid \mathcal{F} , \mathcal{Z}) + \frac{1}{n^2}  \bm{\tilde{{\mu}}_\Gamma}^T (\bI - \bH) \bm{\tilde{{\mu}}_\Gamma} \\
	&=  \var(\bar{L}_{\Gamma}) + \frac{1}{n^2}  \bm{\tilde{{\mu}}_\Gamma}^T (\bI - \bH) \bm{\tilde{{\mu}}_\Gamma} \geq \var(\bar{L}_{\Gamma}) 
\end{align*}
The last inequality holds because the matrix $(\bI - \bH)$ is positive semidefinite. 

\subsection{Proposition 4}

Define $\mathbf{H}_{\mathbf{Q}_{reg}} = \mathbf{Q}_{reg}(\mathbf{Q}_{reg}^T \mathbf{Q}_{reg})^{-1} \mathbf{Q}_{reg}^{T}$ and $\mathbf{H}_{\mathbf{1}_n} = \mathbf{1}_n(\mathbf{1}_n^T\mathbf{1}_n)^{-1} \mathbf{1}_n$. We prove the result for a general matrix $\bQ$ with a fixed number of columns for all $n$. Specifically, under suitable regularity conditions will prove
$$
n \times \se(\bar{L}_{\Gamma}; \bQ)^2 - \var(\sqrt{n} \bar{L}_{\Gamma}) \overset{p}{\to} \limn \frac{1}{n} {\bm\mu_{\Gamma}^T} (\bI - \bHQ) \bm\mu_{\Gamma}.
$$
The proof follows that of Theorem 1 in \citet{Fogarty:2018}. The following regularity conditions are considered to prove Proposition 4. 

\begin{condition} 
	
	Assume that $\bQ$ is a $(n \times k)$ matrix with $k$ fixed. The following conditions hold for each $\Gamma \geq 1$.
	
	\begin{itemize}
		\item[(1)]  (Bounded Fourth Moments).
		
		There exists a $C_1 < \infty$ such that $\frac{1}{n} \sum_{i=1}^{n} \E(L_{\Gamma i}^4) < C_1$ and $\frac{1}{n} \sum_{i=1}^{n} Q_{iq}^4 < C_1$ for $q=1, \ldots, k$. For the specification of $\bQ_{reg} = (\mathbf{1}, \bar{\mathbf{X}})$, the term $Q_{iq}^4$ can be replaced by $\bar{x}_{iq}^4$.
		
		\item[(2)] (Existence of Population Moments). 
		
		\begin{itemize}
			\item  $\frac{1}{n} \sum_{i=1}^{n} \mu_{\Gamma i}$, $\frac{1}{n} \sum_{i=1}^{n} \mu_{\Gamma i}^2$, and $\var(\sqrt{n} \bar{L}_{\Gamma}) = \frac{1}{n} \sum_{i=1}^{n} \var(L_{\Gamma i})$ converge to finite limits as $n \to \infty$. 
			
			\item $\frac{1}{n} \sum_{i=1}^{n} \mu_{\Gamma i} Q_{iq}$ converges to a finite limit for $q=1, \ldots, k$ as $n \to \infty$.
			
			\item $\frac{1}{n} \bQ^T \bQ$ converges to a finite, invertible matrix as $n \to \infty$. 
			
		\end{itemize}

	\end{itemize}
\label{eqn:condi3}
\end{condition}

Define $\kappa_{\Gamma q} = \underset{n \to \infty}{\lim} \frac{1}{n} \sum_{i=1}^{n} \mu_{\Gamma i} Q_{iq}$. This is the limit of the sum of all elements in the $q$th column of $\bm{\mu}_{\Gamma}^T \bQ$. Let $\bm\kappa_{\Gamma}$ as the vector of length $k$ containing limits $\kappa_{\Gamma q}$, $q=1, \ldots, k$. Also, define $\Sigma_{\bQ} = \underset{n \to \infty}{\lim} n^{-1} \bQ^T \bQ$. 

Consider the following lemma:
\begin{lemma}
	Under Condition~\ref{eqn:condi3},
	\begin{enumerate}
		\item $n^{-1} \sum_{i=1}^{n} L_{\Gamma i} Q_{iq}$ converges in probability to $\limn n^{-1} \sum_{i=1}^{n} \mu_{\Gamma i} Q_{iq}$. 
		
		\item $h_{ii} \to 0$
		
		\item $n^{-1} \sum_{i=1}^{n} L_{\Gamma i}^2 \overset{p}{\to} \limn n^{-1} 
		\sum_{i=1}^{n} \{\mu_{\Gamma i}^2 + \var(L_{\Gamma i})\}$
		
	\end{enumerate}
\label{lem1}
\end{lemma}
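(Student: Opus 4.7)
The plan is to exploit the fact that under the matched-pair design, conditional on $\cF$ and $\cZ$, the assignment vectors $(Z_{i1}, Z_{i2})$ are \emph{independent across $i$}, since matched pairs receive independent randomizations and conditioning on $\cZ$ only fixes that exactly one unit per pair is encouraged. Consequently $L_{\Gamma i}$, being a function of $(Z_{i1}, Z_{i2})$ and the fixed quantities in $\cF$, are independent across $i$ (though generally not identically distributed). The three conclusions are then routine applications of Chebyshev's inequality together with the bounded fourth moments and the invertibility of $\lim n^{-1}\bQ^T\bQ$ furnished by Condition~\ref{eqn:condi3}.

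For part (1), I would first observe that by construction $E(n^{-1}\sum_{i=1}^n L_{\Gamma i}Q_{iq}) = n^{-1}\sum_{i=1}^n \mu_{\Gamma i}Q_{iq}$, which converges to $\kappa_{\Gamma q}$ by Condition \ref{eqn:condi3}(2). By independence across $i$,
\begin{align*}
\var\!\left(n^{-1}\sum_{i=1}^n L_{\Gamma i}Q_{iq}\right) = n^{-2}\sum_{i=1}^n Q_{iq}^2\,\var(L_{\Gamma i}) \leq n^{-2}\left(\sum_{i=1}^n Q_{iq}^4\right)^{1/2}\!\left(\sum_{i=1}^n \var(L_{\Gamma i})^2\right)^{1/2}
\end{align*}
by Cauchy--Schwarz. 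Because $\var(L_{\Gamma i})^2 \leq E(L_{\Gamma i}^2)^2 \leq E(L_{\Gamma i}^4)$, the bounded fourth moment assumption implies both sums inside the square roots are $O(n)$, so the variance is $O(1/n)$ and Chebyshev delivers the convergence in probability. Part (3) is entirely analogous: $E(n^{-1}\sum L_{\Gamma i}^2) = n^{-1}\sum\{\mu_{\Gamma i}^2 + \var(L_{\Gamma i})\}$ converges by Condition \ref{eqn:condi3}(2), and independence across $i$ gives variance bounded by $n^{-2}\sum E(L_{\Gamma i}^4) = O(1/n)$.

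For part (2), I would write $h_{ii} = \bQ_i^T(\bQ^T\bQ)^{-1}\bQ_i$ where $\bQ_i$ is the $i$th row. Since $n^{-1}\bQ^T\bQ \to \Sigma_{\bQ}$ with $\Sigma_{\bQ}$ invertible, $\|(\bQ^T\bQ)^{-1}\|_{\mathrm{op}} = O(n^{-1})$. The Euclidean norm $\|\bQ_i\|^2 = \sum_{q=1}^k Q_{iq}^2 \leq k\max_{i,q}Q_{iq}^2$, and the bounded fourth moment bound $\sum_i Q_{iq}^4 \leq nC_1$ implies $\max_i Q_{iq}^2 \leq (nC_1)^{1/2}$. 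Combining, $\max_i h_{ii} = O(n^{-1}\cdot n^{1/2}) = O(n^{-1/2}) \to 0$, which gives the uniform (hence pointwise) result.

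The proof is essentially mechanical once the independence structure and the moment bookkeeping are in place; I do not anticipate a substantive obstacle. The one place requiring care is the assertion that $L_{\Gamma i}$ depends on $\bZ$ only through $(Z_{i1},Z_{i2})$, so that independence of randomizations across pairs transfers directly to independence of the summands; this must be stated explicitly since the same claim will be reused throughout the larger proof of Proposition~4 (for establishing consistency of $n^{-1}\bQ^T\bm{L}_{\Gamma}$ and of the residual sum of squares underlying $\text{se}(\bar{L}_\Gamma;\bQ)^2$).
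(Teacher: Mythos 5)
Your proof is correct and follows essentially the same route as the paper's: compute the expectation, bound the variance using independence of assignments across pairs together with Cauchy--Schwarz and the bounded fourth moments, and conclude by Chebyshev. Your part (2) is in fact slightly more careful than the paper's---you obtain a uniform $O(n^{-1/2})$ bound on $\max_i h_{ii}$ via the operator norm of $(\bQ^T\bQ)^{-1}$ and fourth-moment control of $\max_i Q_{iq}^2$, whereas the paper argues pointwise treating the row $Q_i$ as fixed---but this is a refinement of the same idea rather than a different approach.
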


\begin{proof}[Proof of Lemma 1]
	\hfill\\
	Proof of (1). Since $\E(L_{\Gamma i}) = \mu_{\Gamma i}$, $\E(n^{-1} \sum_{i=1}^{n} L_{\Gamma i} Q_{iq}) \rightarrow \limn n^{-1} \sum_{i=1}^{n} \mu_{\Gamma i} Q_{iq}$. The variance $\var(n^{-1} \sum_{i=1}^{n} L_{\Gamma i} Q_{iq})$ goes to zero, 
	\begin{align*}
		\var(n^{-1} \sum_{i=1}^{n} L_{\Gamma i} Q_{iq}) &= \frac{1}{n^2} \sum_{i=1}^{n} Q_{iq}^2\var(L_{\Gamma i})\\
		& \leq  \frac{1}{n^2} \left\{\sum_{i=1}^{n} \var(L_{\Gamma i})^2 \right\}^{1/2} \left\{ \sum_{i=1}^{n} (Q_{iq})^4 \right\}^{1/2} \\
		& \leq  \frac{1}{n} \left\{ \frac{1}{n} \sum_{i=1}^{n} \E(L_{\Gamma i}^4) \right\}^{1/2} \left\{ \frac{1}{n} \sum_{i=1}^{n} (Q_{iq})^4 \right\}^{1/2} \to 0
	\end{align*}
	
\noindent Proof of (2). Since $h_{ii}$ can be represented by $Q_{i}^T (\bQ^T \bQ)^{-1} Q_i$ and $Q_i$ does not vary with $n$, from $\Sigma_{\bQ} = \underset{n \to \infty}{\lim} n^{-1} \bQ^T \bQ$, we have 
	$$
	\limn h_{ii} = \limn n^{-1} Q_i^T \Sigma_{\bQ}^{-1} Q_i = 0.
	$$

\noindent Proof of (3). This is a straightforward application of Chebyshev's inequality. Observe that $\var(n^{-1} \sum_{i=1}^{n} L_{\Gamma i}^2) \leq n^{-2} \sum_{i=1}^{n} \E(L_{\Gamma i}^4) < n^{-1} C_1\rightarrow 0$, such that average $n^{-1}\sum_{i=1}^nL^2_{\Gamma i}$ converges in probability to its expectation.
	
\end{proof}

For any $\bQ$, $n \times \se(\bar{L}_{\Gamma}; \bQ)^2$ is 
\begin{align*}
	n \times \se(\bar{L}_{\Gamma}; \bQ)^2 &= \frac{1}{n} \left( \tilde{\mathbf{L}}_{\Gamma}^T (\bI - \bHQ) \tilde{\mathbf{L}}_{\Gamma} \right) \\
	&=  \frac{1}{n} \left( \tilde{\mathbf{L}}_{\Gamma}^T \tilde{\mathbf{L}}_{\Gamma} -   \tilde{\mathbf{L}}_{\Gamma}^T \bQ (\bQ^T \bQ)^{-1} \bQ^{T}  \tilde{\mathbf{L}}_{\Gamma} \right).
\end{align*}
From Lemma~\ref{lem1}, the above expression converges in probability to 
\begin{align*}
	\frac{1}{n} \tilde{\mathbf{L}}_{\Gamma}^T \tilde{\mathbf{L}}_{\Gamma} &\overset{p}{\rightarrow} \limn \frac{1}{n} \bm{\mu}_{\Gamma}^T \bm\mu_{\Gamma} + \limn \var(\sqrt{n} \bar{L}_{\Gamma}) \\
	\frac{1}{n} \tilde{\mathbf{L}}_{\Gamma}^T \bQ (\bQ^T \bQ)^{-1} \bQ^{T}  \tilde{\mathbf{L}}_{\Gamma} &\overset{p}{\rightarrow} \bm\kappa_{\Gamma}^T \Sigma_{\bQ}^{-1} \bm\kappa_{\Gamma} = \limn \frac{1}{n} \bm\mu_{\Gamma}^T \bHQ \bm\mu_{\Gamma}.
\end{align*}
Therefore, we have 
$$
n \times \se(\bar{L}_{\Gamma}; \bQ)^2 - \var(\sqrt{n} \bar{L}_{\Gamma}) \overset{p}{\rightarrow}  \limn \frac{1}{n} \bm{\mu}_{\Gamma}^T \bm\mu_{\Gamma} -  \limn \frac{1}{n} \bm\mu_{\Gamma}^T \bHQ \bm\mu_{\Gamma} =  \limn \frac{1}{n} \bm\mu_{\Gamma}^T (\bI -\bHQ) \bm\mu_{\Gamma}.
$$

Also, since the coefficient $R^2$ in a regression of $\bm{\mu}_\Gamma$ on $\bQ_{reg}$ has the representation
$$
R^2 = 1 - \frac{\bm\mu_{\Gamma}^T (1- \bH_{\mathbf{Q}_{reg}}) \bm\mu_{\Gamma}}{\bm\mu_{\Gamma}^T (1- \bH_{\mathbf{1}_{n}})\bm\mu_{\Gamma} },
$$
we have
$$
\frac{n \times \se(\bar{L}_{\Gamma}; \bQ_{reg})^2 - \var(\sqrt{n} \bar{L}_{\Gamma})}{n \times \se(\bar{L}_{\Gamma}; {\mathbf{1}_n})^2 - \var(\sqrt{n} \bar{L}_{\Gamma})} \overset{p}{\rightarrow} 1- R^2. 
$$

\subsection{Proposition 5}

Assume that the pair index $i$ is re-ordered according to the pairs of pairs. For instance, for the $q$th of $n/2$ pairs of pairs, we assume that $i=2q-1$ and $i=2q$, $q=1, \ldots, n/2$, are in the $q$th pair of pairs. Then, the $n \times (n/2)$ matrix $\bQ_{PoP}$ and  the $n \times n$ matrix $\bHQ_{PoP}$ are represented by
$$
\bQ_{PoP} = \begin{bmatrix}
1 & & & \\
1 & & & \\
 & 1 & & \\
 & 1 & & \\
 & & \vdots & \\
 & & & 1 \\
 & & & 1 \\
\end{bmatrix}, \quad \bHQ_{PoP} = \begin{bmatrix}
1/2 & 1/2 & & & & & \\
1/2 & 1/2 & & & & & \\
& & 1/2 & 1/2 & & &\\
& & 1/2 & 1/2 & & &\\
& & & &\ddots & &\\
& & & & & 1/2 & 1/2 \\
& & & & & 1/2 & 1/2  \\
\end{bmatrix}.
$$

Using this representation of $\bQ$, the vector $(\bI - \bHQ_{PoP}) \tilde{\mathbf{L}}_{\Gamma}$ contains residuals. The $i$th component of this vector is $(L_{\Gamma i} - L_{\Gamma \mathcal{J}(i)})/\sqrt{2} $, where $\mathcal{J}(i)$ is index of the pair that was matched with to $i$th pair such that $\mathcal{J}(\mathcal{J}(i)) = i$. Since $(\bI - \bHQ_{PoP})$ is idempotent, the variance estimator $\se(\bar{L}_{\Gamma}; \bQ_{PoP})^2$ can be represented by the equation (5) in the main manuscript. Also, from Proposition 3, the bias $\E[\se(\bar{L}_{\Gamma}; \bQ_{PoP})^2 \mid \mathcal{F}, \mathcal{Z}] - \var(\bar{L}_{\Gamma} \mid \mathcal{F}, \mathcal{Z})$ is
\begin{align*}
	\frac{1}{n^2} \tilde{\bm{\mu}}_{\Gamma}^T (\bI - \bHQ_{PoP}) \tilde{\bm{\mu}}_{\Gamma} & = \frac{1}{n^2} \left\{(\bI - \bHQ_{PoP}) \tilde{\bm{\mu}}_{\Gamma}\right\}^T (\bI - \bHQ_{PoP}) \tilde{\bm{\mu}}_{\Gamma} \\
	& = \frac{1}{2n^2} \sum_{i=1}^{n} (\mu_{\Gamma i} - \mu_{\Gamma \mathcal{J}(i)})^2.
\end{align*}
The last equality stems from the fact that the $i$th component of $(\bI - \bHQ_{PoP}) \tilde{\bm\mu}_{\Gamma}$ is $(\mu_{\Gamma i} - \mu_{\Gamma \mathcal{J}(i)})/\sqrt{2} $.

\subsection{Proposition 6}

Recall that, for some matrix $\bQ$, the new testing procedure is defined as
$$
\varphi_{\bQ}^{(\lambda_0)} (\alpha, \Gamma) =  \1\{\bar{L}_\Gamma/\text{se}(\bar{L}_\Gamma; \bQ) \geq G_\Gamma^{-1}(1-\alpha; \bQ)\}
$$
where $G_{\Gamma}( \cdot; \bQ)$ is the distribution of $\bar{B}_{\Gamma}/\se(\bar{B}_{\Gamma}; \bQ)$ given $|\bm{\zeta^{(\lambda_0)}}|$ and is itself random over $\bz\in \Omega$. 

As in the proof of Proposition 1, we first establish
\begin{align*}
	\underset{n\rightarrow\infty}{\lim}\P\{\bar{L}_\Gamma/\text{se}(\bar{L}_\Gamma; \bQ) \geq \Phi^{-1}(1-\alpha)\} \leq \alpha,
\end{align*}
This holds under Condition \ref{eqn:condi3} for $\bQ_{reg}$ (or, more generally, for matrices $\bQ_{reg}$ whose number of columns do not grow with $n$), as under those conditions the standard errors have limits in probability. For $\bQ_{PoP}$, observe that the number of columns is $n/2$, such that Proposition 3 does not apply. It is sufficient to show that for any $\epsilon > 0$,
\begin{align*}
\underset{n\rightarrow\infty}{\lim}\P\{n\times \text{se}(\bar{L}_\Gamma;\bQ_{PoP})^2 \leq  n\times \var(\bar{L}_\Gamma) - \epsilon\} = 0
\end{align*}
The estimator $n\times \text{se}(\bar{L}_\Gamma; \bQ_{PoP})^2$ has expectation
\begin{align*}
n\times E\{\text{se}(\bar{L}_\Gamma; \bQ_{PoP})^2\}&= n\times \var(\bar{L}_\Gamma) + \frac{1}{2n} \sum_{i=1}^{n} (\mu_{\Gamma i} - \mu_{\Gamma \mathcal{J}(i)})^2.
\end{align*}
The term $(2n)^{-1} \sum_{i=1}^{n} (\mu_{\Gamma i} - \mu_{\Gamma \mathcal{J}(i)})^2$ is nonnegative, and Conditions \ref{eqn:condi1}-\ref{eqn:condi2} imply that $n\times \var(\bar{L}_\Gamma)$ tends to a (positive) limit. We have
\begin{align*}
\var\{n\times \text{se}(\bar{L}_\Gamma;\bQ_{PoP})^2\} \leq 16 n^{-2}\sum_{i=1}^nE(L_{\Gamma i}^4),
\end{align*}
which tends to zero under part (1) of Condition \ref{eqn:condi3}. Through Chebyshev's inequality, we then have for any $\epsilon > 0$
\begin{align*}
&\P\left\{n\times\text{se}(\bar{L}_\Gamma;\bQ_{PoP})^2 - n\times \var(\bar{L}_\Gamma) \leq -\epsilon\right\}\\ & \leq \P\left\{n\times\text{se}(\bar{L}_\Gamma;\bQ_{PoP})^2 - n\times \var(\bar{L}_\Gamma) - (2n)^{-1} \sum_{i=1}^{n} (\mu_{\Gamma i} - \mu_{\Gamma \mathcal{J}(i)})^2\leq -\epsilon\right\}\\
& \leq \P\left\{\left|n\times\text{se}(\bar{L}_\Gamma;\bQ_{PoP})^2 - n\times \var(\bar{L}_\Gamma) - (2n)^{-1} \sum_{i=1}^{n} (\mu_{\Gamma i} - \mu_{\Gamma \mathcal{J}(i)})^2\right|\geq \epsilon\right\}\\
&\leq \var\{n\times \text{se}(\bar{L}_\Gamma;\bQ_{PoP})^2\}/\epsilon^2\rightarrow 0,
\end{align*}
establishing
\begin{align*}
	\underset{n\rightarrow\infty}{\lim}\P\{\bar{L}_\Gamma \geq \text{se}(\bar{L}_\Gamma ; \bQ) \times \Phi^{-1}(1-\alpha)\} \} \leq \underset{n\rightarrow\infty}{\lim}\P\{\bar{L}_\Gamma \geq \text{sd}(\bar{L}_\Gamma ) \times \Phi^{-1}(1-\alpha)\} \} \leq \alpha
\end{align*} with both $\bQ = \bQ_{reg}$ and $\bQ = \bQ_{PoP}$.

Next, we show that $\Phi^{-1}(1-\alpha)$ may be replaced with $G^{-1}_\Gamma(1-\alpha;\bQ)$ for either of these choices for $\bQ$. The proof differs from the analogous portion of Proposition 1 only in that we must instead establish that 
\begin{align*} n \times \se(\bar{B}_{\Gamma}; \bQ)^2 \overset{p}{\rightarrow} \nu_{\Gamma}^2\end{align*} for $\bQ\neq \bm{1_n}$. This can be proved by using $\E\{se(\bar{B}_{\Gamma}; \bQ)^2\} = \var(\bar{B}_{\Gamma})$ for \textit{any} $\bQ$ and that $\var\{n \times \se(\bar{B}_{\Gamma}; \bQ)^2\} \to 0$. Recall that $\se(\bar{B}_{\Gamma}; \bQ)^2 = \frac{1}{n^2} \tilde{\bB}_{\Gamma}^T(\bI - \bHQ) \tilde{\bB}_{\Gamma}$ where $\bar{B}_{\Gamma i} = \frac{B_{\Gamma i}}{\sqrt{1-h_{ii}}}$. Define $\Lambda_{\Gamma}$ be the covariance matrix of $\tilde{\mathbf{B}}_{\Gamma}$, and $\Lambda_{\Gamma ii}$ be the $i$th diagonal element of $\Lambda_{\Gamma}$. We have $\E(B_{\Gamma i}) = 0$ and $\Lambda_{\Gamma ii} = \frac{1}{1-h_{ii}} \var(B_{\Gamma_i})$.

\begin{lemma}
	$\E\{\se(\bar{B}_{\Gamma}; \bQ)^2\} = \var(\bar{B}_{\Gamma})$
\end{lemma}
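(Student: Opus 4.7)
The plan is to reduce the statement to a direct application of the standard expectation formula for quadratic forms, $E(X^T A X) = \tr(A \Sigma) + \mu^T A \mu$ where $\mu = E(X)$ and $\Sigma = \text{Cov}(X)$. First I would record that by the very construction $B_{\Gamma i} = V_{\Gamma i}|\zeta^{(\lambda_0)}_i| - \{(\Gamma-1)/(\Gamma+1)\}|\zeta^{(\lambda_0)}_i|$, one has $E(V_{\Gamma i}) = \theta_\Gamma - (1-\theta_\Gamma) = (\Gamma-1)/(\Gamma+1)$, and the subtracted centering term is calibrated exactly so that $E(B_{\Gamma i}\mid |\bm{\zeta}^{(\lambda_0)}|) = 0$, hence $E(\tilde B_{\Gamma i}) = 0$ as well. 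This zero-mean property is already noted in the paragraph preceding the lemma.

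Next I would verify that $\Lambda_{\Gamma} = \text{Cov}(\tilde{\bB}_\Gamma)$ is diagonal: in the reference distribution $|\bm\zeta^{(\lambda_0)}|$ is treated as fixed while the $V_{\Gamma i}$ are independent across $i$, so the $B_{\Gamma i}$ are mutually independent with $\Lambda_{\Gamma ii} = \var(B_{\Gamma i})/(1-h_{ii})$. Applying the quadratic form identity with $X = \tilde{\bB}_\Gamma$, $\mu = 0$, $\Sigma = \Lambda_\Gamma$, and $A = \bI - \bHQ$, the mean-squared term drops out. Because $\Lambda_\Gamma$ is diagonal, the trace collapses to $\tr\{(\bI - \bHQ)\Lambda_\Gamma\} = \sum_{i=1}^n (1-h_{ii})\Lambda_{\Gamma ii} = \sum_{i=1}^n \var(B_{\Gamma i})$. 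Dividing by $n^2$ and noting that independence gives $\var(\bar B_\Gamma) = n^{-2}\sum_{i=1}^n \var(B_{\Gamma i})$, the two expressions coincide, yielding the desired equality.

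There is no real obstacle here; the computation is precisely the one underlying Proposition 3, specialized to a random vector deliberately constructed to have mean zero, so that the $\bm{\tilde\mu}_\Gamma^T(\bI - \bH)\bm{\tilde\mu}_\Gamma$ bias contribution of that earlier result vanishes identically. This is the reason the identity holds with equality for \emph{any} choice of $\bQ$, rather than producing a nonnegative bias in the manner of $\se(\bar{L}_\Gamma;\bQ)^2$. If anything merits care, it is simply making explicit that $h_{ii}$ depends only on the fixed matrix $\bQ$ and not on the randomness in $\bV_\Gamma$, so the rescaling by $\sqrt{1-h_{ii}}$ commutes with the expectation.
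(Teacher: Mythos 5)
Your proof is correct and is essentially the same as the paper's: both apply the quadratic-form expectation identity to $\tilde{\bB}_\Gamma^T(\bI-\bHQ)\tilde{\bB}_\Gamma$, use $E(B_{\Gamma i})=0$ to kill the mean term, and collapse the trace via the diagonal covariance $\Lambda_{\Gamma ii}=\var(B_{\Gamma i})/(1-h_{ii})$. Your added remarks on why $E(B_{\Gamma i})=0$ and why the result holds with equality for any $\bQ$ are accurate but do not change the argument.
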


\begin{proof}
	$\E\{\se(\bar{B}_{\Gamma}; \bQ)^2\}$ can be computed by
\begin{align*}
\E[\se(\bar{B}_{\Gamma}; \bQ)^2] &= \frac{1}{n^2} \E[\tilde{\bB}_{\Gamma}^T(\bI - \bHQ) \tilde{\bB}_{\Gamma}] \\
&= \frac{1}{n^2} \tr[(\bI - \bHQ) \Lambda_{\Gamma}] + \E(\tilde{\bB}_{\Gamma})^T (\bI - \bHQ) \E(\tilde{\bB}_{\Gamma}) \\
&= \frac{1}{n^2} \tr\{(\bI - \bHQ) \Lambda_{\Gamma}\}\\
&= \frac{1}{n^2} \sum_{i=1}^{n} \var(B_{\Gamma i}) = \var(\bar{B}_{\Gamma})
\end{align*}
\end{proof}
The second to last line follows from $E(B_{\Gamma i}) = 0$ for all $i$, as the random variable $B_{\Gamma i}$ behaves as though the proportional dose model holds under the worst-case assignment probabilities. For $\bQ_{reg}$, by using the fact that $h_{ii} \to 0$ and assuming $\frac{1}{n} \sum_{i=1}^{n} \E(B_{\Gamma i}^4)$ is bounded, $n\times \var\{\se(\bar{B}_{\Gamma}; \bQ_{reg})^2\} \to 0$ through an analogous argument to that in Proposition 4. Assuming $n^{-1}\sum_{i=1}^nE(\bar{B}_{\Gamma i})$ is bounded also gives that $n\times \var\{\se(\bar{B}_{\Gamma}; \bQ_{PoP})^2 \} \to 0$, such that $n \times \se(\bar{B}_{\Gamma}; \bQ_{PoP})^2 \overset{p}{\rightarrow} \nu_{\Gamma}^2$. With this established, the proof that the quantile $G^{-1}_\Gamma(1-\alpha ; \bQ)$ with $\bQ = \bQ_{reg}$ or $\bQ_{PoP}$,  may be employed is analogous to Proposition 1.

To prove that $\varphi_{\bQ}^{(\lambda_0)}(\alpha,\Gamma)$ with $\bQ = \bQ_{reg}$ or $\bQ = \bQ_{PoP}$ is both less conservative under the null and more powerful under the alternative than the test using the conventional standard error, it suffices to show that for any $\epsilon > 0$ and for either of these choices $\bQ$,
\begin{align*}
\underset{n\rightarrow\infty}{\lim}\P[n\times\left\{\text{se}(\bar{L}_\Gamma; \bm{1_n})^2 - \text{se}(\bar{L}_\Gamma ; \bQ)^2\right\} \leq -\epsilon]  &= 0
\end{align*}

Observe that
\begin{align*}
&\P[n\left\{\text{se}(\bar{L}_\Gamma; \bm{1_n})^2 - \text{se}(\bar{L}_\Gamma ; \bQ)^2\right\} \leq -\epsilon]\\
&=\P\left(n\left[\text{se}(\bar{L}_\Gamma; \bm{1_n})^2 - E\{\text{se}(\bar{L}_\Gamma; \bm{1_n})^2\} +  E\{\text{se}(\bar{L}_\Gamma; \bm{1_n})^2\}- \text{se}(\bar{L}_\Gamma ; \bQ)^2\right] \leq -\epsilon\right)\\
&\leq \P\left(n\left[\text{se}(\bar{L}_\Gamma; \bm{1_n})^2 - E\{\text{se}(\bar{L}_\Gamma; \bm{1_n})^2\}\right] \leq -\epsilon\right) +  \P\left(n\left[E\{\text{se}(\bar{L}_\Gamma; \bm{1_n})^2\}- \text{se}(\bar{L}_\Gamma ; \bQ)^2\right] \leq -\epsilon\right)\\
\end{align*}
The first probability tends to zero, as under Conditions \ref{eqn:condi1}-\ref{eqn:condi2} the conventional standard error converges in probability to its expectation. For $\bQ_{reg}$, the second probability also tends to zero under Condition \ref{eqn:condi3} through a proof analogous to Proposition 4. For $\bQ_{PoP}$, we need conditions to further ensure that
\begin{align*}
\underset{n\rightarrow\infty}{\lim}\left\{\frac{1}{n}\sum_{i=1}^n(\mu_{\Gamma i} - \bar{\mu}_\Gamma)^2 - \frac{1}{2n}\sum_{i=1}^n(\mu_{\Gamma i} - \mu_{\Gamma \mathcal{J}(i)})^2\right\} \geq 0.
\end{align*} 
In a sense, the required condition is that pairing pairs on the basis of observed covariates is effectively creating pairs of pairs with similar expectations. Observe that if we randomly paired the pairs without any consideration of similarity on observed covariates, the difference in these two terms would \textit{equal} zero. Regularity conditions are needed to preclude pairings of pairs yielding a worse alignment of expectations than what would be expected under random pairings of pairs. The sufficient condition reflecting this natural requirement takes the form
\begin{align*}
\underset{n\rightarrow\infty}{\lim}\frac{1}{2n}\sum_{i=1}^n(\mu_{\Gamma i} - \bar{\mu}_\Gamma)(\mu_{\Gamma \mathcal{J}(i)} - \bar{\mu}_\Gamma) \geq 0.
\end{align*}
See \citet{Abadie:2008} for a discussion of sufficient conditions in a superpopulation formulation where the potential outcomes are viewed as random but the covariates as fixed, wherein natural connections between our sufficient condition and Lipschitz conditional expectation and conditional variance functions are explored.

\section{Additional details on improved standard errors}

\subsection{Comparative improvement and fundamental limitations of bias reduction for finite-population standard errors}\label{sec:limit}
\citet{Abadie:2008} consider a superpopulation formulation wherein the covariates $\bx$ are viewed as fixed, but the potential outcomes are random. Under this generative framework, they show that under suitable regularity conditions, $n\times \text{se}(\bar{L}_1; \bQ_{PoP})^2$ is consistent for $n\times\var(\bar{L}_1\mid \bx, \cZ)$ in randomized experiments with  perfect compliance. Note however that even at $\Gamma=1$, the target of estimation in our framework is instead  $\var(\bar{L}_1\mid \cF, \cZ)$, and that $\cF$ includes the potential outcomes. That inference conditions upon the potential outcomes results in what \citet{Ding:2019} refer to as an \textit{idiosyncratic} component of heterogeneity which cannot be explained by observed covariates. As a result, $n\times \text{se}(\bar{L}_\Gamma; \bQ_{PoP})^2$ will not generally be consistent for $n\times\var(\bar{L}_\Gamma \mid \cF, \cZ)$. Instead, it will have a positive asymptotic bias determined by the limiting value of $(2n)^{-1}\sum_{i=1}^n(\mu_{\Gamma i} - \mu_{\Gamma \mathcal{J}(i)})^2$ given $\cF$ and $\cZ$, which generally cannot be driven to zero as it depends on idiosyncratic features of the observed study population. Nonetheless, by pairing the pairs one may reasonably expect that the explainable portion of the variation in $\mu_{\Gamma i}$ has been captured in the limit so long as the underlying function is sufficiently well-behaved, whereas the method based on linear regression may fail to entirely do so under misspecification.

To develop intuition, imagine for a moment that the terms $\mu_{\Gamma i} = E(\bar{L}_{\Gamma i}\mid \cF, \cZ)$ are themselves drawn from a distribution conditional upon $\bx_i$ with expectation $\eta(\bx_i)$ and variance $\sigma^2(\bx_i)$. Let $\varepsilon(\bx_i) = \mu_{\Gamma i} - \eta(\bx_i)$ such that $E\{\varepsilon(\bx_i) \mid \bx_i\} = 0$. Under suitable regularity conditions,  the limiting bias in $n\times \text{se}(\bar{L}_\Gamma; \bQ_{reg})^2$ as an estimator of $ n\times \var(\bar{L}_\Gamma\mid \cF, \cZ)$ is
\begin{align*}n\{\text{se}(\bar{L}_\Gamma;\bQ_{reg})^2 - \var(\bar{L}_\Gamma\mid \cF, \cZ)\} &\overset{p}{\rightarrow} \underset{n\rightarrow \infty}{\lim}n^{-1} \{\bm{\eta}(\bx) + \bm{\varepsilon}(\bx)\}^T(\bI - \bH_{reg})\{\bm{\eta}(\bx) + \bm{\varepsilon}(\bx)\}
	\\&= n^{-1} \underset{n\rightarrow \infty}{\lim}\bm{\eta}(\bx)^T(\bI - \bH_{reg})\bm{\eta}(\bx) + n^{-1}\underset{n\rightarrow \infty}{\lim}\sum_{i=1}^n\varepsilon^2(\bx_i).\end{align*}
The term $n^{-1}\underset{n\rightarrow \infty}{\lim}\sum_{i=1}^n\varepsilon^2(\bx_i)$ in the expression is the idiosyncratic effect variation described in \citet{Ding:2019}, and appears because inference is being conducted conditional upon $\cF$ with the potential outcomes fixed, rather than conditional upon only $\bx_i$ while viewing the potential outcomes as random. The bias would only depend upon this idiosyncratic term were $\eta(\bx_i)$ linear in $\bx_i$, but an additional positive factor persists under misspecification. For the pairs of pairs estimator, under suitable moment conditions given in \citet{Abadie:2008},  we instead have
\begin{align*}n\{\text{se}(\bar{L}_\Gamma;\bQ_{PoP})^2 - \var(\bar{L}_\Gamma\mid \cF, \cZ)\} &\overset{p}{\rightarrow} \underset{n\rightarrow \infty}{\lim}(2n^{-1}) \sum_{i=1}^n\{\eta(\bx_i) - \eta(\bx_{\mathcal{J}(i)}) + \varepsilon(\bx_i) - \varepsilon(\bx_{\mathcal{J}(i)})\}^2\\
	&= \underset{n\rightarrow \infty}{\lim}(2n^{-1}) \sum_{i=1}^n\{\eta(\bx_i) - \eta(\bx_{\mathcal{J}(i)})\}^2 + n^{-1}\underset{n\rightarrow \infty}{\lim}\sum_{i=1}^n\varepsilon^2(\bx_i),\end{align*}
Under additional Lipschitz and boundedness conditions on $\eta(\bx_i)$ along with an assumption that $n^{-1}\sum_{i=1}^n||\bx_i - \bx_{\mathcal{J}(i)}||^2$ tends to zero, arguments akin to those in \citet{Abadie:2008} imply that $\underset{n\rightarrow \infty}{\lim}(2n)^{-1}\sum_{i=1}^n\{\eta(\bx_i) - \eta(\bx_{\mathcal{J}(i)})\}^2 = 0$. Under these conditions, it would then follow that $\underset{n\rightarrow \infty}{\text{plim}}\; \text{se}(\bar{L}_\Gamma;\bQ_{PoP})/\text{se}(\bar{L}_\Gamma;\bQ_{reg}) \leq 1$, such that $\text{se}(\bar{L}_\Gamma ; \bQ_{PoP})$ would provide a less conservative standard error in the limit than $\text{se}(\bar{L}_\Gamma;\bQ_{reg})$.
\subsection{Pairs of pairs with an odd number of pairs}
An odd number of pairs can readily be accommodated through a host of corrections. One approach is inspired by the output of the \texttt{nonbimatch} function within the \texttt{nbpMatching} package in \texttt{R} \citep{Lu:2011}, a common package for nonbipartite matching. When given an odd number of elements, \texttt{nonbimatch} introduces a ``ghost" element has zero distance between all other elements, and then proceeds with the optimization problem. The pair matched to the ghost is, in reality, not matched to any other pair. We can then match the unmatched pair to the most similar pair of pairs, creating $(n-1)/2-1$ pairs of pairs and one triple of pairs. One could then proceed with $\bQ_{PoP}$ as the $n\times (n-1)/2$ matrix, where the first $(n-1)/2-1$ columns contain indicators for pairs of pairs membership, and the final column contains binary indicators for membership in the lone triple.  The estimator $\text{se}(\bar{L}_\Gamma ; \bQ_{PoP})^2$ is a conservative estimate for $\var(\bar{L}_\Gamma \mid \cF, \cZ)$. 
\newpage

\subsection{Pseudocode for conducting the sensitivity analysis}
The following algorithm outlines how one would perform a sensitivity analysis using the improved standard errors:
\begin{algorithm}[h]
\caption{Studentized sensitivity analysis at $\Gamma$ with improved standard errors\label{alg:stu}}
\begin{enumerate} 
\item Form a matrix $\bQ$ and its hat matrix $\bH = \bQ(\bQ^T\bQ)^{-1}\bQ^T$
\item In the $m$th of $M$ iterations:
\begin{enumerate}
\item Generate $V_{\Gamma i}\overset{iid}{\sim} 2\times Bernoulli\left(\frac{\Gamma}{1+\Gamma}\right)-1$ for each $i$.
\item Compute $B_{\Gamma i} = V_{\Gamma i}|\zeta^{(\lambda_0)}_i| - \left(\frac{\Gamma-1}{1+\Gamma}\right)|\zeta^{(\lambda_0)}_i|$ for each $i$.
\item Compute $\bar{B}_\Gamma = n^{-1}\sum_{i=1}^nB_{\Gamma i}$.
\item Form $\tilde{B}_{\Gamma i} = B_{\Gamma i}/\sqrt{1-h_{ii}}$ for each $i$
\item Compute $\text{se}(\bar{B}_\Gamma;\bQ)$ as 
\begin{align*}
\text{se}(\bar{B}_\Gamma;\bQ) = \sqrt{\frac{1}{n^2}\mathbf{\tilde{B}_\Gamma}^T(\mathbf{I}-\bH)\mathbf{\tilde{B}_\Gamma}}.
 \end{align*}
\item Compute $A^{(m)}_\Gamma = \bar{B}_\Gamma/\text{se}(\bar{B}_\Gamma; \bQ)$; store this value across iterations. 
\end{enumerate}
\item Approximate the bound on the greater-than $p$-value by \begin{align*} \hat{p}_{val} &= \frac{1 + \sum_{m=1}^M\1\{A^{(m)}_\Gamma\geq \bar{L}^{obs}_\Gamma/\text{se}(\bar{L}^{obs}_\Gamma ; \bQ)\}}{1+M}\end{align*}
\end{enumerate}
\end{algorithm}

\section{Simulation studies highlighting the improved standard errors}
\subsection{A simulation with no hidden bias}

We now conduct a simulation study to further illustrate the potential improvements that standard errors exploiting effect modification may provide. In each simulated data set there are $n$ pairs, each matched exactly on a $k\geq 5$ dimensional vector of covariates $\bx_i$. In the $m$th of $M$ iterations, the pair-specific covariate vector is drawn such that each component is $iid$ Uniform on the interval [0,1]. The treatments actually received $d_{ij}(z)$ ($z=0,1$) are binary. We assume that there are no defiers, that the exclusion restriction holds, and that individuals $ij$ are assigned status as compliers, never-takers, and always-takers independently with probability $p_C = 0.58$ (the estimated compliance rate from our data set), $p_N = 0.21$, and $p_C = 0.21$. We then use the functional form for the potential outcomes suggested in \citet[\S 7.1]{Fogarty:2018}, suitably modified to include potential noncompliance. For $z=0,1$ and with $\varepsilon_{ij}$ $iid$ standard Normal,\begin{align*} 
y_{ij}(z) &=\begin{cases} a\left(10\sin(\pi x_{i1}x_{i2}) + 20(x_{i3}-1/2)^2 + 10\exp(x_{i4}) +5(x_{i5}-1/2)^3 + \varepsilon_{ij}\right) & d_{ij}(z) = 1\\
 10\sin(\pi x_{i1}x_{i2}) + 20(x_{i3}-1/2)^2 + 10\exp(x_{i4}) + 5(x_{i5}-1/2)^3 + \varepsilon_{ij} & d_{ij}(z) = 0\nonumber \end{cases}
.\nonumber
\end{align*}
 We proceed with two different values of $a$ in the above model: $a=1$ and $a=2$. At $a=1$ the proportional dose model holds at $\lambda=0$, and there is no effect modification. At $a=2$ effect modification is present and is nonlinear in the observed covariates.


The effect ratio in simulation $m$ is $\lambda_m$, and is determined once the potential outcomes and dosages received are simulated. We imagine that there is no hidden bias, such that the sensitivity model holds at $\Gamma=1$. We then proceed with inference for the null hypothesis $\lambda = \lambda_m$ at $\Gamma=1$ with a two-sided alternative, and we set $\alpha=0.1$. We choose this larger value for $\alpha$ because of the conservativeness of inference in the presence of effect heterogeneity: in simulations with heterogeneous effects, smaller values of $\alpha$ may lead to near zero true Type I error rates.  We compare three choices for $\bQ$ for the variance estimator:\vspace{-.1 in}\begin{enumerate}
\item[(a)] \textit{Intercept}. The usual variance estimator for a paired design. $\mathbf{Q} = \bm{1}_n$.
\item[(b)] \textit{Linear.} Including a constant column and the covariates $\bar{\bx}_{i}$. $\mathbf{Q} = (\bm{1}_n, \mathbf{\bar{X}})$.
 
\item[(c)] \textit{Pairs of Pairs}. Form pairs of pairs with similar covariate values through nonbipartite matching as described in \S 5.4 of the manuscript. $\mathbf{Q} = \mathbf{Q}_{PoP}$.
\end{enumerate}We also construct 90\% confidence intervals for $\lambda_m$ through inverting $\varphi_\bQ^{(\lambda_0)}$, and compare the interval widths across  the three choices for $\bQ$. We vary the value of $a$, the sample size between $n=100$ and $n=2000$, and the number of covariates between $k=5$ and $k=10$. Note that the settings with $k=10$ include five irrelevant covariates.



\begin{table}[h]
\begin{center}
\caption{\label{tab:effmodsim} Simulated performance with different standard error estimators. The desired Type I error rate in all settings is $\alpha=0.1$.}
\begin{tabular}{l l l c c c c c c}
\toprule
&&&\multicolumn{2}{c}{Intercept}&\multicolumn{2}{c}{Linear} & \multicolumn{2}{c}{Pairs of Pairs}\\
&&& Size & CI Length & Size & CI Length & Size & CI Length\\
\midrule
Prop. Dose & $n=100$ & $k=5$ & 0.097& 0.840& 0.098& 0.840 &0.100 & 0.846\\
Prop. Dose & $n=100$& $k=10$& 0.103 & 0.847 &0.104 & 0.848 &0.103 & 0.853\\
Prop. Dose & $n=300$ & $k=5$ & 0.099 & 0.470 & 0.099 & 0.470 & 0.099 & 0.471\\
Prop. Dose & $n=300$& $k=10$ & 0.105& 0.472 &0.105 &0.472& 0.104&0.474\\
Prop. Dose & $n=1000$ & $k=5$ & 0.010& 0.256& 0.098& 0.256 &0.099 & 0.256\\
Prop. Dose & $n=1000$& $k=10$& 0.102 & 0.256 &0.102 & 0.256 &0.104 & 0.256 \\
Prop. Dose & $n=2000$ & $k=5$ & 0.101 & 0.181 & 0.101 & 0.181 & 0.102& 0.181\\
Prop. Dose & $n=2000$& $k=10$ & 0.102& 0.181 &0.102 &0.181& 0.101&0.181\\
\midrule
Effect Mod. & $n=100$& $k=5$ & 0.015& 3.241 & 0.043 &2.652 & 0.043 &2.730\\
Effect Mod. & $n=100$& $k=10$& 0.013 &3.240 & 0.040 & 2.657 & 0.029 & 2.873\\
Effect Mod.& $n=300$& $k=5$& 0.009 & 1.826 & 0.042 & 1.475 & 0.045 & 1.47\\
Effect Mod.& $n=300$& $k=10$& 0.011 & 1.825 & 0.042 & 1.475 &0.033 &1.555\\
Effect Mod. & $n=1000$& $k=5$ & 0.010& 0.996 & 0.042 &0.802 & 0.048 &0.780\\
Effect Mod. & $n=1000$& $k=10$& 0.012 &0.997& 0.040 & 0.802 & 0.036 & 0.824\\
Effect Mod.& $n=2000$& $k=5$& 0.011 & 0.701 & 0.040 & 0.567 & 0.048 & 0.546\\
Effect Mod.& $n=2000$& $k=10$& 0.011 & 0.701 & 0.042 & 0.567 &0.039&0.575\\
\bottomrule
\end{tabular}
\end{center}
\end{table}
Table \ref{tab:effmodsim} contains the results from the simulation study. We first observe that under the proportional dose model, the estimated size for all three procedures was roughly 0.1 in all settings. This is expected: under the proportional dose model all three procedures are guaranteed to be finite-sample exact since they are randomization tests. The confidence interval lengths are also comparable within each combination of $k$ and $n$ since all three standard errors times $\sqrt{n}$ are consistent for $\sqrt{n}\times \sd(\bar{L}_\Gamma \mid \cF, \cZ)$ under proportional doses.

With effect modification, in all settings the true Type I error rates from the resulting methods fall below the desired size.  The simulation study contains idiosyncratic, irreducible effect heterogeneity in the form of the $\varepsilon_{ij}$ terms, which induces conservativeness in the resulting inference. The regression-based and pairs of pairs standard errors both outperform the conventional standard errors, which results in less conservative inference and narrower confidence intervals. Despite the fact that the effect modification is nonlinear in the observed covariates, the regression-based standard error outperforms the pairs of pairs procedure in every simulation setting except for ($n=1000, k=5$) and ($n=2000, k=5)$. This is due to the difficulties faced by nonparametric estimation in low sample, high covariate dimension regimes. This does not corrupt the procedure based on the pairs of pairs standard error in terms of invalid inference, but it does limit its efficacy. This reflects the bias-variance tradeoff: in small samples and/or with a large number of covariates, it may be the case that a misspecified linear model produces better results than the nonparametric pairs of pairs approach due to improved stability of the estimated functional form.

\subsection{A simulation with hidden bias present}
Here we modify the simulation study in the previous section such that unmeasured confounding is actually present. We use the same generative model for the observed covariates and the potential outcomes, but rather than assuming no hidden bias we assume that the sensitivity model holds at $\Gamma=1.5$ and that nature presents us with the worst-case unmeasured confounders in each matched pair that lead to the largest expectation for the test statistic, making it such that the larger value of the two potential values for $\zeta_i^{(\lambda_m)} - \{(1.5-1)/(1+1.5)\}|\zeta_i^{(\lambda_m)}|$ occurs with probability 1.5/(1+1.5)=0.6 in each pair, and the smaller value with probability 0.4. We conduct a sensitivity analysis at $\Gamma=1.5$, such that the procedures should control the Type I error rate and provide sensitivity intervals with valid coverage. We again use the conventional standard error, the regression-based standard error, and the pairs of pairs standard error, allow for both homogeneous and heterogeneous effects, and conduct our simulation study with $n=100, 300, 1000, 2000$. For each of the 16 simulation settings, we simulate 1000 data sets and conduct a sensitivity analysis at $\Gamma=1.5$ with at two-sided alternative at $\alpha = 0.1$. We also compute 90\% sensitivity intervals.

\begin{table}[h]
\begin{center}
\caption{\label{tab:effmodsimgam} Simulated performance with different standard error estimators. The desired Type I error rate in all settings is $\alpha=0.1$. The sensitivity model holds at $\Gamma=1.5$, and we conduct a sensitivity analysis at $\Gamma=1.5$ with a two-sided alternative.}
\begin{tabular}{l l l c c c c c c}
\toprule
&&&\multicolumn{2}{c}{Intercept}&\multicolumn{2}{c}{Linear} & \multicolumn{2}{c}{Pairs of Pairs}\\
&&& Size & CI Length & Size & CI Length & Size & CI Length\\
\midrule
Prop. Dose & $n=100$ & $k=5$ & 0.045& 1.791& 0.050& 1.792 &0.045 & 1.796\\
Prop. Dose & $n=100$& $k=10$& 0.050 & 1.785 &0.053 & 1.785 &0.051 & 1.788\\
Prop. Dose & $n=300$ & $k=5$ & 0.046 & 1.320 & 0.045 & 1.320 & 0.046 & 1.322\\
Prop. Dose & $n=300$& $k=10$ & 0.051& 1.318 &0.049 &1.318& 0.050&1.318\\
Prop. Dose & $n=1000$ & $k=5$ & 0.056& 1.071& 0.057& 1.071 &0.058 & 1.071\\
Prop. Dose & $n=1000$& $k=10$& 0.046 & 1.072 &0.045 & 1.072 &0.044 & 1.073 \\
Prop. Dose & $n=2000$ & $k=5$ & 0.045 & 0.988 & 0.043 & 0.988 & 0.043& 0.987\\
Prop. Dose & $n=2000$& $k=10$ & 0.057& 0.988 &0.059 &0.988& 0.058&0.987\\
\midrule
Effect Mod. & $n=100$& $k=5$ & 0.001& 6.457 & 0.005 &5.886 & 0.004 &5.958\\
Effect Mod. & $n=100$& $k=10$& 0.001 &6.510 & 0.005 & 5.960 & 0.003 & 6.171\\
Effect Mod.& $n=300$& $k=5$& 0.000 & 4.875 & 0.000& 4.550 & 0.000 & 4.536\\
Effect Mod.& $n=300$& $k=10$& 0.000 & 4.866 & 0.000 &4.571  &0.000&4.593\\
Effect Mod. & $n=1000$& $k=5$ & 0.000& 3.940 & 0.000 &3.757 & 0.000 &3.732\\
Effect Mod. & $n=1000$& $k=10$& 0.000 &3.941& 0.000& 3.758 & 0.000 & 3.777\\
Effect Mod.& $n=2000$& $k=5$& 0.000& 3.631 & 0.000 & 3.503 & 0.000 & 3.483\\
Effect Mod.& $n=2000$& $k=10$& 0.000& 3.637 & 0.000 & 3.508 &0.000&3.514\\
\bottomrule
\end{tabular}
\end{center}
\end{table}
Table \ref{tab:effmodsimgam} contains the results from the simulation study. We first observe that under the proportional dose model, the estimated size for all three procedures was below 0.1 in all settings and fell closer to 0.05. That the size doesn't exceed 0.1 for any sample size is to be expected: the proportional dose model holds and all three procedures are are randomization tests. The confidence interval lengths are also comparable within each combination of $k$ and $n$ since all three standard errors times $\sqrt{n}$ are consistent for $\sqrt{n}\times \sd(\bar{L}_\Gamma \mid \cF, \cZ)$ under proportional doses. At $\Gamma=1$, we saw in the main text that the estimated size was exactly 0.1, rather than below 0.1, in all simulation settings. This conservativeness at $\Gamma=1.5$ has to do with our choice of a two-sided alternative, and would not have been present had we chosen a greater-than alternative instead. At $\Gamma=1$, there is a single expectation for the test statistic being deployed; however, in the sensitivity analysis, the researcher must consider a range of expectations for different patterns of hidden bias. To reject with a two-sided alternative, the test statistic must be either significantly above the upper-bound on the expectation, or significantly below the lower-bound on the expectation. At $\Gamma=1$ the upper and lower bounds are one in the same, but in a sensitivity analysis they are not. In our simulation the true pattern of bias was chosen to maximize the expectation of the test statistic. The test statistic falling significantly above the true expectation can then lead to a false rejection, but falling below the true expectation likely will not, as the test statistic will typically still be above the lower bound on the expectation. As rejections are by and large happening in the right tail only, the resulting Type I error rate falls closer to $\alpha/2 = 0.05$.

With effect modification, in all settings the true Type I error rates from the resulting methods fall well below the desired size, with most settings recording zero rejections in any data set. In a sensitivity analysis effect heterogeneity not only results in conservative standard errors, but it also frequently yields conservative bounds on the worst-case expectation \citep{Fogarty:2019} which unfortunately cannot be overcome without risking an anti-conservative procedure if effects are instead homogeneous. Looking at the lengths of the sensitivity intervals, we see that just as in the simulation assuming no hidden bias the regression-based and pairs of pairs standard errors both outperform the conventional standard errors, which results in less conservative inference and narrower confidence intervals. Despite the fact that the effect modification is nonlinear in the observed covariates, the regression-based standard error outperforms the pairs of pairs procedure in every simulation setting except for ($n=1000, k=5$) and ($n=2000, k=5)$. This is due once again to the difficulties faced by nonparametric estimation in low sample, high covariate dimension regimes.

\section{An omnibus test for effect heterogeneity in instrumental variable studies}
\subsection{Testing for effect modification}
\citet[\S 6]{Fogarty:2018} describes how the discrepancy between standard errors involving covariate information and the conventional standard error estimator for a paired design can be used to form an exact test for the null hypothesis of no effect modification. In the context of an instrumental variable design, this amounts to a test of the null hypothesis that the proportional dose model holds for some $\lambda_0$, against the alternative that it does not hold for all $i,j$ and for any shared value of $\lambda_0$.

Suppose that $\Gamma=1$, and for a given value $\lambda_0$ consider as a test statistic the $F$-statistic comparing a regression of $\zeta_i^{(\lambda_0)}$ on (a) a design matrix incorporating covariates, such as $\bQ_{reg}$ or $\bQ_{PoP}$; to (b) the null model containing only an intercept column. If effect modification existed on the basis of the observed covariates, one would expect the regression incorporating covariates to reduce the sum of squared error relative to a model containing only an intercept, which would in turn inflate the $F$-statistic. Call the resulting statistic $F(\bZ, \bm{\zeta}^{(\lambda_0)})$. If the proportional dose model holds at $\lambda_0$, then $|\zeta_i^{(\lambda_0)}|$ would be fixed across randomizations with only its sign varying. This would allow for computation of a randomization-based $p$-value for the proportional dose model holding at $\lambda_0$:
\begin{align}\label{eq:pval}
p^{(\lambda_0)} &= \frac{1}{2^n}\sum_{\bz \in \Omega}\1\{F(\bz, \bm{\zeta}^{(\lambda_0)}) \geq f^{(\lambda_0)}\},
\end{align}
where $f^{(\lambda_0)}$ is the observed value of the test statistic.

The value of $\lambda_0$ is a nuisance parameter for this test. Under perfect compliance, i.e. the typical paired experiment, Proposition 3 of \citet{Fogarty:2018} shows that $F(\bZ, \bm{\zeta}^{(\lambda_0)})$ is pivotal with respect to the particular value of $\lambda_0$, such all values for $\lambda_0$ yield the same $p$-value, and any value for $\lambda_0$ could be used. Unfortunately, this does not hold in the presence of noncompliance, meaning that $p^{(\lambda_0)}$ varies as a function of $\lambda_0$. To overcome this, we employ the approach of \citet{Berger:1994} to create a test of this hypothesis. By the lemma of \citet[\S 2]{Berger:1994}, a valid $p$-value for testing the null hypothesis that the proportional dose model holds is
\begin{align}\label{eq:pbeta}
p_\beta &=  \underset{\lambda_0 \in CI_{1-\beta}}{\sup}p^{(\lambda_0)} + \beta.
\end{align}
\noindent where $CI_{1-\beta}$ is \textit{any} $100(1-\beta)\%$ confidence interval for $\lambda_0$ under the assumption of the proportional dose model. This permits the use of randomization-based confidence intervals derived in \citet{Imbens:2005b} and \citet[\S 5.4]{Rosenbaum:2002} formed using test statistics other than the difference in means. However, for the purposes of improving the power of the test when effect modification is present, the confidence intervals described in \S 6 of the manuscript based on improved standard errors may be preferred. 
Through employing a randomization distribution, the resulting procedure yields an exact test for any sample size. See  \citet{ding2015randomization} for other examples of randomization-based tests of treatment effect heterogeneity.

An omnibus test of effect heterogeneity is of special interest for instrumental variable studies. Even if one assumes monotonicity and the exclusion restriction, the effect ratio merely attests to the sample average treatment effect among compliers, rather than the average treatment effect for the study population. Some authors argue that a key weakness of IV designs is that they only identify this more local estimand  \citep{deaton2010instruments,swanson2014think,swanson2017challenging}. For an IV design to provide an estimate of the SATE, the investigator must invoke an effect homogeneity or no-interaction assumption \citep{robins1994correcting, Hernan:2006}. If we fail to reject the null hypothesis, there is no evidence to reject the proportional dose model or to suggest that effect heterogeneity is present. If effects are roughly homogeneous, the IV estimand may be a good proxy for the sample average treatment effect in the study population.

\subsection{An application to our data set}
To formally test for effect modification within our data set, we use the test of the proportional dose model described in the previous subsection. We set $\beta=0.01$, and construct a $99\%$ confidence interval for $\lambda_0$ by inverting the test $\varphi^{(\lambda_0)}_{\bQ_{PoP}}$ at $\Gamma=1$. We then maximize (\ref{eq:pval}) over $\lambda_0$ in this confidence interval, and form $p_{0.01}$ through (\ref{eq:pbeta}). For septic patients, the resulting $p$-value for the complication outcome is 0.65, and the $p$-value is 0.10 for the length of stay outcome. Within the non-septic patients, the $p$-values are 0.67 and 0.77 for the complication and length of stay outcomes respectively.

\section{Design sensitivity}
\subsection{A restatement of the favorable setting}
For ease of reading we restate the favorable situation under which the calculations in \S 7 of the manuscript proceed. We imagine that $\zeta_i^{(\lambda_0)}$ is generated as \begin{align}\label{eq:gen2}
\zeta_i^{(\lambda_0)} &= \epsilon_i + S_i(\lambda - \lambda_0),
\end{align} 
where $\epsilon_i = (Z_{i1}-Z_{i2})\{(Y_{i1}-Y_{i2}) - \lambda(D_{i1} - D_{i2})\}$ are the adjusted encouraged-minus-non encouraged differences in responses, and $S_i = (Z_{i1}-Z_{i2})(D_{i1}-D_{i2})$ are the encouraged-minus-non encouraged differences in the treatment received, reflecting the strength of the instrument. Note that this generative model does \textit{not} imply the proportional dose model, such that the individual-level effects are allowed to be heterogeneous.

We assume that $\epsilon_i$ are $iid$ from a symmetric distribution with mean zero and finite variance $\sigma^2$. The treatments received are assumed binary. We assume that there are no defiers, that the exclusion restriction holds, and that individuals $ij$ are assigned status as compliers, never-takers and always-takers independently with probability $p_C$, $p_N$, and $p_A$ respectively. This results in $\P(S_i=1) = p_C + p_Ap_N$, $\P(S_i=-1) = p_Ap_N$ and $\P(S_i=0) = 1-p_C-2p_Ap_N$. The true treatment effect among compliers is $\lambda$, while $\lambda_0$ is its value under the null.

\subsection{A formula for design sensitivity}

In a sensitivity analysis, bias dominates variance in large samples. Under mild regularity conditions there is a number $\tilde{\Gamma}$, the {design sensitivity}, such that the power of a sensitivity analysis tends to 1 if $\Gamma < \tilde{\Gamma}$ and tends to 0 if $\Gamma > \tilde{\Gamma}$ as $n\rightarrow \infty$ in the favorable situation of no bias in treatment assignment \citep{Rosenbaum:2004b}. Larger values for $\tilde{\Gamma}$ indicate reduced sensitivity of inferences to unmeasured confounding in large samples. 

For any value of $\Gamma$, our procedure employs the random variable $\bar{L}_\Gamma = n^{-1}\sum_{i=1}^n\{\zeta_i^{(\lambda_0)} - (\Gamma-1)/(1+\Gamma)|\zeta_i^{(\lambda_0)}|\}$. Under the favorable situation being considered this random variable has expectation $E(\zeta_i^{(\lambda_0)}) - (\Gamma-1)/(1+\Gamma)E|\zeta_i^{(\lambda_0)}|$, where $E(\zeta_i^{(\lambda_0)}) = p_C(\lambda-\lambda_0)$ and 
$E|\zeta_i^{(\lambda_0)}| = (p_C + 2p_Ap_N)E|\epsilon_i + (\lambda-\lambda_0)| + (1-p_C - 2p_Ap_N)E|\epsilon_i|$.
\setcounter{proposition}{6}
\begin{proposition}\label{prop:designsens}
Suppose that $\zeta_i^{(\lambda_0)}$ are drawn $iid$ from (\ref{eq:gen2}), and that $\epsilon_i$ are drawn $iid$ from a symmetric, mean zero distribution with finite variance. Then, the design sensitivity is\begin{align}\label{eq:designsens}
\tilde{\Gamma}&=  \frac{E|\zeta_i^{(\lambda_0)}| + E(\zeta_i^{(\lambda_0)})}{E|\zeta_i^{(\lambda_0)}| -E(\zeta_i^{(\lambda_0)})}.
\end{align}
\end{proposition}
\begin{proof}
Under $iid$ draws from a distribution with finite variance the strong law of large numbers applies to $n\times \text{se}(\bar{L}_\Gamma)^2$. Therefore, Proposition 2 of \citet{Rosenbaum:2013} holds, and Corollary 1 of \citet{Rosenbaum:2013} yields the formula for the design sensitivity.
\end{proof}
\begin{remark}
An analogous proof shows that $\varphi_{\bQ}^{(\lambda_0)}$ also has the design sensitivity given in (\ref{eq:designsens}). In the limit, sensitivity to hidden bias is determined by the extent to which a test statistic can control the worst-case bias at $\Gamma$, and discrepancies in variance become irrelevant. So while exploiting effect modification can provide improvements in power at $\Gamma=1$, there is a sense in which improvements provided by exploiting effect modification in a sensitivity analysis are confined to small and moderate sample sizes.
\end{remark}


The formula for design sensitivity in (\ref{eq:designsens}) demonstrates the dependence of a sensitivity analysis's limiting power on various components of the data generating process. While the magnitude of the effect relative to its postulated value matters, so too does the proportion of compliers relative to always-takers and never-takers. The variance of $\epsilon_i$ (the degree of within-pair heterogeneity) plays an important role through its influence on $E|\epsilon_i|$ and $E|\epsilon_i + \lambda - \lambda_0|$. These results help explain the phenomenon observed in our data, where larger treatment effect estimates in the septic subgroup were less robust to hidden bias than the smaller estimates in the non-septic subgroup.

Consider the length of stay outcome variable, and suppose that the estimated effect ratios are actually the true values of $\lambda$. Suppose further that the standard deviation of $\zeta_i^{(\lambda_S)}$ and $\zeta_i^{(\lambda_{NS})}$ in our sample reflect the true standard deviations for the distribution of $\epsilon_i$ in the septic and non-septic subgroups. Table \ref{tab:DS} gives the design sensitivities from (\ref{eq:designsens}) using these values for $\lambda$ and $\sigma$ under both Normal and Laplace distributions while varying the proportion of compliers.  In keeping with \citet{Small:2008}, we observed that the weaker the instrument, the lower the design sensitivity becomes. Furthermore, we see that for a given level of compliance and regardless of distribution, the parameter settings motivated by the non-septic subgroup, with a \textit{lower} effect ratio but lower heterogeneity, outperforms the septic subgroup in terms of design sensitivity. This illustrates that in the limit, a smaller treatment effect may prove more robust to hidden bias when the data exhibit lower heterogeneity.

\begin{table}[h]
\begin{center}
\caption{\label{tab:DS} Design sensitivity calculations with parameters inspired by septic and non-septic length of stay effect estimates, varing the percentage of compliers and the distribution for $\epsilon_i$ in (\ref{eq:gen2}). Calculations assume there are no defiers, and that noncompliers are equally likely to be always-takers and never-takers. The columns in bold use the estimated compliance percentage from our data.}
\begin{tabular}{c c c c c c c c   c c c c c c}
\toprule
&\multicolumn{6}{c}{Normal} &\multicolumn{6}{c}{Laplace}\\
\cmidrule(lr){2-7}
\cmidrule(lr){8-13}
\multicolumn{1}{r}{Compliance} & 100\% & 75\% & \textbf{58\%}& 50\% & 25\% & 10\%& 100\% & 75\% & \textbf{58\%}&50\%  & 25\% & 10\%\\
\midrule
Septic & \multirow{2}{*}{1.97} & \multirow{2}{*}{1.65} & \multirow{2}{*}{\textbf{1.47}} &\multirow{2}{*}{1.39} &\multirow{2}{*}{1.18}&\multirow{2}{*}{1.07}&\multirow{2}{*}{2.11} &\multirow{2}{*}{1.75}& \multirow{2}{*}{\textbf{1.54}} & \multirow{2}{*}{1.45} & \multirow{2}{*}{1.20} & \multirow{2}{*}{1.08}\\

$\lambda = 6.8;\;\; \sigma = 25.3$\\
Non-septic & \multirow{2}{*}{3.19} & \multirow{2}{*}{2.33} &\multirow{2}{*}{\textbf{1.91}} & \multirow{2}{*}{1.74} &\multirow{2}{*}{1.32}&\multirow{2}{*}{1.12}&\multirow{2}{*}{3.50} &\multirow{2}{*}{2.51}& \multirow{2}{*}{\textbf{2.02}}& \multirow{2}{*}{1.83} & \multirow{2}{*}{1.35} & \multirow{2}{*}{1.13}\\
$\lambda = 4.1;\;\; \sigma = 8.9$\\
\bottomrule
\end{tabular}
\end{center}
\end{table}

\bibliographystyle{apalike}
\bibliography{er_bib}

\end{document}